\newtheorem{theorem}{Theorem}
\newtheorem{corollary}{Corollary}
\newtheorem{lemma}[theorem]{Lemma}
\newcommand{\modelName}[0]{BEBA}
\newcommand{\ptitle}[1]{\vspace{1mm}\noindent{\bf #1.}}
\setlist[itemize]{leftmargin=*}
\title{Opinion Dynamics with Backfire Effect and Biased Assimilation}
\author{
Xi Chen$^1$\footnote{Contact Author}\and
Panayiotis	Tsaparas$^2$\and
Jefrey Lijffijt$^1$\and
Tijl De Bie$^1$\\
\affiliations
$^1$Dept. of Electronics and Information Systems, IDLab, Ghent University\\
$^2$Department of Computer Science and Engineering, University of Ioannina\\
\emails
$^1$ \{firstname.lastname\}@ugent.be \and $^2$ tsap@cs.uoi.gr
}
\begin{document}

\maketitle

\begin{abstract}
The democratization of AI tools for content generation, combined with unrestricted access to mass media for all (e.g. through microblogging and social media), makes it increasingly hard for people to distinguish fact from fiction.
This raises the question of how individual opinions evolve in such a networked environment without grounding in a known reality.
The dominant approach to studying this problem uses simple models from the social sciences on how individuals change their opinions when exposed to their social neighborhood, and applies them on large social networks.

We propose a novel model that incorporates two known social phenomena:
(i) \emph{Biased Assimilation}: the tendency of individuals to adopt other opinions if they are similar to their own; (ii) \emph{Backfire Effect}: the fact that an opposite opinion may further entrench someone in their stance, making their opinion more extreme instead of moderating it. 
To the best of our knowledge this is the first model that captures the Backfire Effect.
A thorough theoretical and empirical analysis of the proposed model reveals intuitive conditions for polarization and consensus to exist, as well as the properties of the resulting opinions.
\end{abstract}

\section{Introduction}\label{sec:introduction}
Recent years have seen an increasing amount of attention from the computational social sciences in the study of opinion formation and polarization over social networks, with applications ranging from politics to brand perception~\cite{
conover2011political,gionis2013opinion,akoglu2014quantifying}.
Much of this research leverages pre-existing opinion formation models that have been studied for decades~\cite{Jackson-book,RevModPhys}.
These models formalize the fact that people form their opinions through interactions with others.
One of the best-known models is DeGroot's model~\cite{degroot1974reaching}, which considers an individual's opinion as dynamic, assuming that it is updated as the weighted average of the individual's current opinion and those of her social neighbors. The weights represent the strength of the social connections.

DeGroot's model is elegant and intuitive and it guarantees that the opinions converge towards a consensus~\cite{degroot1974reaching,Jackson-book}.
Yet, the opinions cannot polarize, contradicting empirical observations~\cite{baron1996social,gilbert2009blogs}.
Variants of DeGroot's model have been proposed that incorporate \textit{biased assimilation}~\cite{krause2000discrete,dandekar2013biased}, which is also known as \textit{confirmation bias} or \textit{myside bias} and refers to the phenomenon where information that corroborates someone's beliefs affects those beliefs more strongly than information that contradicts it~\cite{lord1979biased}. Incorporating biased assimilation has been shown to potentially lead to polarization~\cite{dandekar2013biased} or opinion clustering~\cite{krause2000discrete}.

An extreme manifestation of confirmation bias is a behavior known in social psychology
as the \emph{Backfire Effect}~\cite{nyhan2010corrections,allahverdyan2014opinion}. It refers to the fact that, when an individual is faced with information that contradicts their opinion,
they will not only tend to discredit it, but they will also become more entrenched and thus extreme in their opinion.
The backfire effect may help explain the emergence of polarization. Yet, it has so far been overlooked by existing opinion formation models. 

Motivated by these observations, we propose the {\modelName} model, a novel opinion formation model
that simultaneously models the Backfire Effect and Biased Assimilation.
\modelName{} depends on a single---intuitive, node-dependent---parameter $\beta_i$,
which we call the \emph{entrenchment} of node $i$.
It captures both the tendency of node $i$ to become more entrenched by opposing opinions and the bias towards assimilating opinions favorable to its own.
Our main contributions are:
\begin{itemize}[noitemsep,topsep=0pt,parsep=0pt,partopsep=0pt]
\item We propose the \modelName{} model of opinion formation, which accounts for both the Backfire Effect and Biased Assimilation (Section~\ref{sec:BEBAmodel}). To the best of our knowledge \modelName{} is the first model that incorporates the Backfire Effect.
\item We theoretically analyze the \modelName{} model in Section~\ref{sec:theoreticalAnalysis}, studying conditions for reaching consensus or polarization.
\item In Section~\ref{sec:experiments} we empirically evaluate, on real and synthetic data, the effect of both network topology and initial opinions on polarization / convergence.
\end{itemize}

\section{Related Work}
Opinion formation has been studied in diverse research fields, from psychology and social sciences to economics and physics~\cite{Jackson-book,RevModPhys}. The former mostly use empirical methods to understand the factors that affect opinion formation, while the latter mostly aim to understand emergent behavior implied by these theories.

Two observations from psychology and social sciences relating to our work are the biased assimilation and backfire effect~\cite{corner2012uncertainty,lord2009biased}, which state that individuals are more inclined to accept opinions closer to their own~\cite{lord1979biased}, and that, when exposed to the opposite opinion, individuals entrench themselves in their own opinion~\cite{nyhan2010corrections,chong2007framing,herr1986consequences}, respectively.

We study the common setting where opinions are formalized as real values, formed through social interactions (see~\cite{Jackson-book} and~\cite{RevModPhys} for surveys).
The most popular models include the Voter model~\cite{voter1,voter2}, DeGroot's model~\cite{degroot1974reaching}, and the Friedkin-Johnsen model~\cite{friedkin1990social}. 
Yet, none of these account for the biased assimilation or backfire effect.

There is work on modeling the fact that users are more influenced by opinions closer to their own. The bounded confidence models~\cite{DeffuantNAW00,DeffuantAWF02,Hegselmann02opiniondynamics} assume that a user is influenced only by opinions that are within $\epsilon$ of its own. The work of Kempe et al.,~\cite{Kempe:2016} assumes that there are different types of opinions and users are influenced by opinions of similar types. Das et al.,~\cite{Das:2014} consider a biased version of the voter model that biases individuals to adopt similar opinions.
The work most closely related to ours is that of Dandekar et al.,~\cite{dandekar2013biased} who propose a variant of DeGroot's model to capture the biased assimilation effect. In their model, the importance that a node attaches to the opinion of a neighbor depends on their agreement. However, it does not model the backfire effect.

\section{Model definition}
\label{sec:BEBAmodel}

In this section, we first describe existing models on which our work builds and then introduce our nonlinear opinion formation \modelName{} model, which is generalized from DeGroot's model, and accounts for both backfire effect and biased assimilation. Finally, we provide a comparison between our \modelName{} and the related biased opinion formation model on a simple example, to highlight their qualitative differences.

\subsection{Preliminaries and background}

\ptitle{Notation} Let $G = (V,E)$ denote a connected undirected network,
with $V = \left \{  1, ..., n\right \}$ the set of nodes,
and $E \in V \times V$ the set of $m = \left |  E\right |$ edges, where $(i,j) \in E$ iff $(j,i) \in E$.
When the network is weighted, $w_{ij}=w_{ji}$ represents the weight of edge $(i,j)$.
We use $N(i)$ to denote the set of neighbors of node $i$: $N(i) \triangleq \left \{ j \in V | (i,j ) \in E \right \}$.

In the considered models, opinions are real numbers within a fixed interval $[0,1]$ or $[-1,1]$,
depending on the model.
To discriminate between the two, we use $x$ to denote the opinions within $[0,1]$, and $y$ to denote the opinions that belong to $[-1,1]$.
All models we consider in this work can be defined as dynamical systems, where opinions are updated iteratively.
We use $x_i(t)$ (resp. $y_i(t)$) to denote the opinion of node $i$ at iteration (time) $t=0,1,2,\ldots$.
We further use $\mathbf{x}(t)$ and $\mathbf{y}(t)$ to denote the opinion vectors for the network at time $t$.
With $x_i$ (resp. $y_i$) we denote the opinion of node $i$ after convergences for $t\rightarrow\infty$ (if that limit exists), and $\mathbf{x}$ (resp. $\mathbf{y}$) to denote the corresponding vectors.

\ptitle{DeGroot's Model}
This model~\cite{degroot1974reaching} is an averaging opinion formation model, where the individual's opinion is determined by the average of her own opinion and that of her neighbors.
More specifically, 
it is updated as follows:
\begin{align}
x_i(t+1) = \frac{w_{ii} x_i(t) +\sum_{j \in N(i)} w_{ij}x_j(t) }{w_{ii} + \sum_{j \in N(i)}w_{ij}} \label{eq:degroots}
\end{align}
where $w_{ii}$ represents the extent to which the node values its own opinion, and $w_{ij}$ is the strength of the connection/friendship between node $i$ and $j$.
Iterative opinion updates will converge to a stationary state, where every node has the same opinion $x_i = x^*$~\cite{Jackson-book}. Therefore, the model always reaches consensus, and never polarizes.

\ptitle{Biased Opinion Formation}
The BOF model~\cite{dandekar2013biased} generalizes DeGroot's to incorporate \textit{biased assimilation}.
Given a weighted undirected graph $G = (V,E,w)$, every node $i \in V$ is assigned a bias parameter $b_i \geq 0$. Higher values of $b_i$ means that node $i$ is more biased. The opinion value $x_i (t) \in \left [  0,1\right ]$ 
is interpreted as the degree of support for opinion position $1$ (i.e., the highest possible opinion value), while $1-x_i(t)$ is the support for $0$. It is defined as
\begin{align*}
x_i (t+1) = \frac{w_{ii}x_i(t) + (x_i(t))^{b_i}s_i(t)}{w_{ii}+  (x_i(t))^{b_i}s_i(t) + (1 - x_i(t))^{b_i}(d_i - s_i(t))}
\end{align*}
where $s_i(t) \triangleq \sum_{j \in N(i)}w_{ij} x_j(t)$ is the weighted sum of $i$'s neighbouring opinions, and $d_i \triangleq \sum_{j \in N(i)}w_{ij}$ is the weighted degree of node $i$.
During the updating process, node $i$ weighs confirming and disconfirming evidence in a biased way: weighing the neighboring support for opinion $1$ by $(x_i(t))^{b_i}$, and that for opinion $0$ by $(1 - x_i(t))^{b_i}$.

\subsection{The BEBA model}
We now define the BEBA model, which is a generalization of DeGroot's model that incorporates both biased assimilation and backfire effect.
To capture these phenomena, we adapt DeGroot's model by dynamically setting the weights on the edges. 
Let $\mathbf{y}(t)$ denote the vector of opinions at time $t$, with $y_i(t) \in [-1,1]$.
Then, rather than using fixed weights as in DeGroot's model, we propose to let the weights be determined by the opinions as well. Specifically, for an edge $(i,j)\in E$ we define the edge weight $w_{ij}(t)$ at time $t$ as
$$w_{ij}(t) = \beta_i y_i(t)y_j(t) + 1.$$
The product $y_i(t)y_j(t)$ captures the degree of (dis)agreement between the opinions of node pair $(i,j)$. 
The parameter $\beta_i>0$ models the influence for $i$ that the (dis)agreement with node $j$ will have on the weight $w_{ij}(t)$: the larger, the stronger the biased assimilation and backfire effects.
We will refer to $\beta_i$ as the \emph{entrenchment parameter} of node $i$.

Given the weight $w_{ij}(t)$, the opinions in the BEBA model are updated as in DeGroot's model:
\begin{align}
y_i(t+1) = \frac{w_{ii}y_i(t) + \sum_{j \in N(i)} w_{ij}(t)y_j(t)}{w_{ii} + \sum_{j \in N(i)} w_{ij}(t)}
\label{eq:BABEmodel}
\end{align}
Note that when $\beta_i = 0$, BEBA's update rule is identical to that of DeGroot's (Eq.~(\ref{eq:degroots})) for unweighted networks. 
When $\beta_i \neq 0$, we discriminate two cases depending on $w_{ij}(t)$:
\begin{enumerate}[leftmargin=*,noitemsep,topsep=0pt,parsep=0pt,partopsep=0pt]
\item $w_{ij}(t) < 0$: This case models the backfire effect where $\beta_i y_i(t)y_j(t) < -1$. Since $\beta_i > 0$, $y_i(t)y_j(t)<0$, that is, nodes $i$ and $j$ hold opposing views. 
Multiplying $y_j(t)$ with this negative weight $w_{ij}(t)$ in the summation in the numerator leads to a contribution of the same sign as $y_i(t)$, while adding the negative weight to the denominator reduces it, inflating the resulting quotient. The combination of these two effects models the backfire effect.

\item $w_{ij}(t) > 0$: This case models biased assimilation, including two subcases:  
	\begin{enumerate}[noitemsep,topsep=0pt,parsep=0pt,partopsep=0pt]
    \item $0<\beta_i y_i(t)y_j(t)$: 
		Thus, node $i$ and $j$ have both positive or both negative opinions, resulting in an increased weight $w_{ij}(t)$.
 In this case node $i$ assimilates the opinion of neighbor $j$ more strongly if the extent of their agreement is stronger. 
	\item $-1 < \beta_i y_i(t)y_j(t) < 0$: Here 
	nodes $i$ and $j$ hold opposing but not too different opinions. 
	In this case, node $i$ critically evaluates the conflicting opinion of node $j$, but still assimilates it to a reduced extent.
	\end{enumerate}
\end{enumerate}

Note that the denominator in Eq.~(\ref{eq:BABEmodel}) can become $0$ resulting in a diverging opinion, or negative causing an unnatural opinion reversal.
We consider this situation to be beyond the model's validity region, and thus define the \modelName{} model as:
\begin{align*}
y_i(t+1) = \left\{\begin{matrix}
 \mathrm{sgn} (y_i(t)) \;\;\; \mathrm{if} \; w_{ii} + \sum_{j \in N(i)} w_{ij}(t)  \leq 0,\\
 \frac{w_{ii}y_i(t) + \sum_{j \in N(i)} w_{ij}(t)y_j(t)}{w_{ii} + \sum_{j \in N(i)} w_{ij}(t)}  \;\;\; \mathrm{otherwise}.
\end{matrix}\right. 
\end{align*}

Moreover, for a small denominator the resulting opinions may fall outside the range $[-1,1]$. To address this, we additionally clip negative values at $-1$ and positive values at $1$.

\subsection{Comparison of the \modelName{} and BOF models}
\label{sec:comparison}

There is a similarity between the BOF and our BEBA model, in that both alter the weights of the DeGroot's.
Consider a simple star graph of five nodes where node $1$ is in the center, and focus on one iteration of updating on node $1$. 
In this case, we can observe how the two models update the opinion of a single node, given the opinions of her neighborhood.

First, we deal with the fact that BOF model assumes only positive opinion values, while our model assumes opinions being both positive and negative. 
Note that the value range of opinions is important in both models, since the BOF model weights the opinion values, while our model exploits the disagreement in the sign.
To compare the models, we assume positive opinion values $x_i(t) \in \left [ 0, 1\right ]$ on all nodes in the graph, and use them to implement an update of the BOF model.
For our model, we transform opinions to the range $[-1,1]$ by setting $y_i(t) = 2x_i(t) - 1$.
Then we compute the value $y_1(t+1)$ as defined in BEBA, and rescale back. 

\begin{figure}[tp]
	\centering
	\includegraphics[width = 0.9\hsize]{./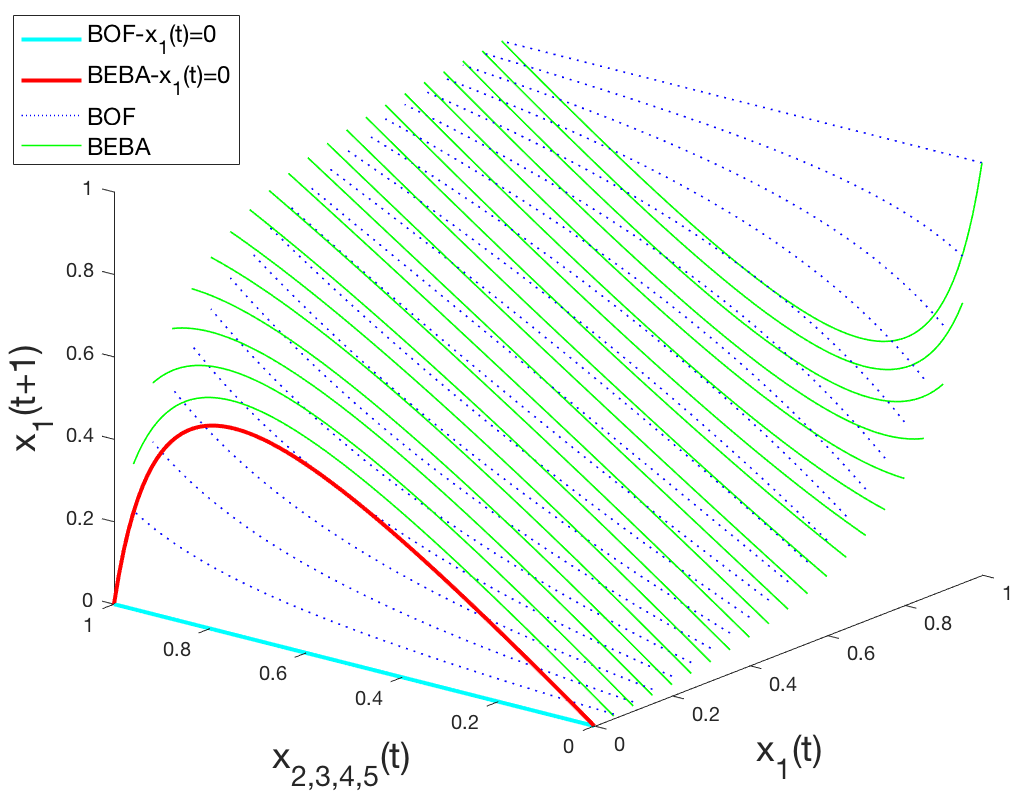}
	\caption{Opinion Formation on the Star Graph}
	\label{fig:star5of}
\end{figure}

In our experiment we assume $x_i(t)$ identical for all $i = 2,...,5$, and $x_i(t) \in [0,1]$ for all nodes.
We set $w_{11} = 1$ for both models, $b_1 = 1$ for BOF, and consider the values of $1$ and $2.5$ for $\beta_1$ in BEBA model. 
The opinion value $x_1(t+1)$ for both models, as a function of $x_{2,3,4,5}(t)$ and $x_1(t)$ is shown in Figure~\ref{fig:star5of}. 
The difference between the two models becomes clear when $x_1(t)$ takes extreme values (i.e., $0$ or $1$).

Figure~\ref{fig:star5Slice}(a) shows the curves for the two models when $x_1(t) = 0$. 
In BOF, the opinion $x_1(t+1)$ remains unchanged at value $0$. 
This is true regardless of the value of $b_1$. 
Thus, extreme nodes never change their opinions, even a little, even when they are not biased at all. 
However, according to the biased assimilation, unbiased individuals should be influenced by similar opinions, while even extreme nodes assimilate opinions that are close to their own. 
In contrast, our model better captures the biased assimilation in this case. 
In Figure~\ref{fig:star5Slice}(a), for $\beta_1 = 1$, which corresponds to a mildly biased node, the opinion of node $1$ can be moderated by that of her neighbors to different extents, while $x_1(t+1)$ never exceeds $0.5$. Therefore, extreme nodes are not stuck in the extremes.
\begin{figure}[tp]
\centering
\includegraphics[width = 1.0\hsize]{./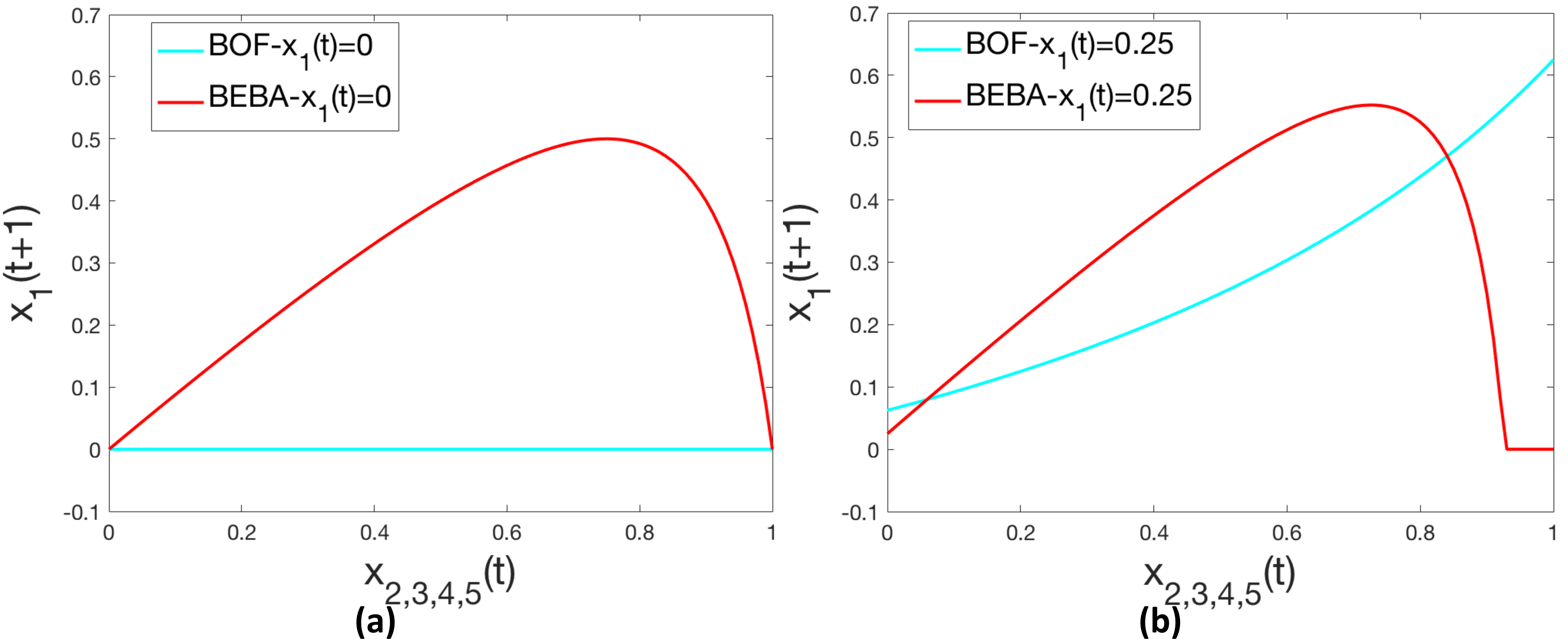}
\caption{$x_1(t+1)$ as a function of $x_{i}(t)$, (a) $\beta_1 = 1$, $b_1 = 1$, $x_1(t) = 0$; (b) $\beta_1 = 2.5$, $b_1 = 1$, $x_1(t) = 0.25$.}
\label{fig:star5Slice}
\end{figure}

To better understand the backfire effect, we increase $\beta_1$ to $2.5$, and set $x_1(t) = 0.25$ as shown in Figure~\ref{fig:star5Slice}(b).
We observe that when the disagreement between node 1 and her neighbors becomes large (i.e., $> 0.9$), $x_1(t+1)$ drops under $0.25$, until it becomes completely extreme with value $0$.

From the plots in Figure~\ref{fig:star5Slice} we also observe that for the different combinations of $\beta_1$ and $x_1(t)$,
there exists a value of the neighboring opinions that causes the largest change in $x_1(t+1)$. 
For example, when $\beta_1 = 1$ and $x_1(t) = 0$, neighboring opinion of around $0.75$ is the most influential as shown in Figure~\ref{fig:star5Slice}(a); for $\beta_1 = 2.5$ and $x_1(t) = 0.25$, opinion around $0.7$ is the most influential according to Figure~\ref{fig:star5Slice}(b).

\section{Theoretical Analysis} \label{sec:theoreticalAnalysis}
This section contains theoretical analysis of the BEBA model for two settings\footnote{Supplemental materials including theoretical proofs, datasets information, and more experimental results available in the Appendix.}. First we investigate the dynamics of opinions for a single agent in a fixed environment, and secondly we study the dynamics of polarization for all nodes in a connected social network.

\subsection{A single agent in a fixed environment} \label{sec:agentFixEnv}
Here we theoretically analyze the limit behavior of a single agent's opinion in an environment with a fixed opinion.
An analysis of this type has been done for the BOF model~\cite{dandekar2013biased}.
The setup is admittedly somewhat artificial but helps to gain a better understanding of the model.
It has been deemed realistic in cases where the fixed environment consists of the news media, billboards, etc.~\cite{dandekar2013biased}.
It also models the situation where the single agent is connected to a network
that is large enough such that adding it will not meaningfully affect the network.

For the agent $i$, we denote $y(t) \in \left [-1, 1\right ]$ its opinion at time $t$, $\beta > 0$ its entrenchment parameter, and $y$ its converged opinion (i.e., $\lim_{t\to\infty} y(t)$).
We assume the agent weighs its own opinion with $w_{ii} = w$. 
For simplicity, we only consider the situation where the environment contains one node,
but it should be noted that the analysis below can be easily generalized to several nodes.
Let $p \in [ -1,1]$ be the fixed environmental opinion.
Then, according to \modelName{}, the agent updates its opinion as follows:
\begin{align*} 
y(t+1) =\left\{\begin{matrix}
 \mathrm{sgn} (y(t))& \mathrm{if} \; w + \beta p y(t) + 1  \leq 0,\\ 
\frac{wy(t) + \beta p^2 y(t) + p}{w + \beta p y(t) + 1}&  \mathrm{otherwise}.
\end{matrix}\right.
\end{align*}

Before stating a theorem that quantitatively characterizes the limit $y$, we consider the behavior.
[Case 1:] For sufficiently small entrenchment $\beta$ (i.e., not biased), the fixed environment's opinion $p$ will be sufficiently attracting such that $y=p$ regardless of $y(t)$.
The same is true when $p=0$: the neutral opinion is never polarizing and thus always attracting.
[Case 2:] On the other hand, for sufficiently large entrenchment $\beta$ (i.e., biased),
the limit $y$ will depend on the similarity of initial opinion $y(t)$ with the environment's opinion $p$:
[Case 2a:] if $y(t)$ is similar to $p$, $p$ should have an attracting effect on $y(t)$ such that its limit $y=p$;
[Case 2b:] if $y(t)$ is very different from $p$, however,
the backfire effect will cause the agent's opinion to diverge from $p$, such that $y=\mathrm{sgn}(y(t))$.
[Case 2c:] Between Case 2a and Case 2b, there will be a `sweet spot' where $y(t)$ is neither sufficiently similar to $p$ for $y(t)$ to converge to $p$,
nor sufficiently different for it to diverge to $\mathrm{sgn}(y(t))$.
This is an unstable equilibrium where $y(t)$ remains constant through time, i.e. $y=y(t)$.

This intuition is formalized in the following theorem.
For conciseness and transparency, we state it for the situation where $p\leq 0$.
It is trivial to adapt the theorem for $p\geq 0$.
\begin{theorem} Depending on the value of $\beta$ relative to $p$:\label{thm:singleagentfixedenvironment}
\begin{description}[noitemsep,topsep=0pt,parsep=0pt,partopsep=0pt]
\item[Case 1:] When $p=0$ or $\beta <-1/p$, the agent's opinion always converges to $p$, i.e., $y=p$.
\item[Case 2:] When $p<0$ and $\beta \geq -1/p$, there are three possibilities depending on how similar $y(t)$ is to $p$.
(This situation is illustrated in Figure~\ref{fig:Theorem1EG}.)
\begin{description}[noitemsep,topsep=0pt,parsep=0pt,partopsep=0pt]
\item[a:] If $y(t)<-\frac{1}{\beta p}$, $y(t)$ will be sufficiently attracted to $p$ such that $y=p$.
\item[b:] If $y(t)>-\frac{1}{\beta p}$, $y(t)$ will diverge away from $p$ such that $y=\mathrm{sgn}(y(t))=1$.
\item[c:] If $y(t)=-\frac{1}{\beta p}$, $y(t)$ will remain constant through time, such that $y=-\frac{1}{\beta p}$.
\end{description}
\end{description}
\end{theorem}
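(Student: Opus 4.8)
The plan is to treat the update map, on the region where its denominator $w+\beta p\,y(t)+1$ is positive, as the M\"obius transformation $f(y)=\frac{(w+\beta p^{2})y+p}{\beta p\,y+w+1}$, and to exploit that a one-dimensional M\"obius map is monotone with at most two fixed points, so its orbits are fully governed by those fixed points together with the two non-smooth ingredients of \modelName{}: the $\mathrm{sgn}$ fallback, which is active exactly when $y\ge y^{*}:=-\frac{w+1}{\beta p}$ (and $y^{*}>0$ since $p<0$, so the fallback value is $1$), and the clipping of overshoots to $[-1,1]$.

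First I would assemble the elementary facts. Solving $f(y)=y$ gives $(y-p)(\beta p\,y+1)=0$, so the only fixed points are $p$ and $q:=-\frac{1}{\beta p}$. A one-line computation gives $f'(y)=\frac{w(w+1+\beta p^{2})}{(\beta p\,y+w+1)^{2}}>0$, so $f$ is increasing on any interval avoiding the pole $y^{*}$; evaluating at the fixed points gives $f'(p)=\frac{w}{w+1+\beta p^{2}}\in(0,1)$ (so $p$ is attracting) and $f'(q)=\frac{w+1+\beta p^{2}}{w}>1$ (so $q$ is repelling), and one checks $q<y^{*}$ always. Since $q=\frac{1}{\beta|p|}$, we have $q\le 1$ iff $\beta\ge-\frac{1}{p}$, which is precisely the Case~1/Case~2 dichotomy; and clearing the (positive) denominator turns the ``$f$ stays in $[-1,1]$'' question into the sign checks $f(1)\le 1\iff(p-1)(\beta p+1)\le 0$ and $f(-1)\ge-1\iff(p+1)(1-\beta p)\ge 0$.

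In Case~1 ($p=0$ is immediate, since then $y(t+1)=\frac{w}{w+1}y(t)\to 0$; otherwise $\beta<-\frac1p$ forces $q>1$), both $q$ and the pole $y^{*}$ lie outside $[-1,1]$, so neither the $\mathrm{sgn}$ branch nor clipping ever fires, $f$ maps $[-1,1]$ into itself, and $p$ is the unique fixed point there; the textbook monotone-dynamics argument ($f(y)<y$ on $(p,1]$ and $f(y)>y$ on $[-1,p)$, read off from $f'(p)<1$ and the absence of other fixed points) then forces every orbit to $p$. In Case~2 ($p<0$, $\beta\ge-\frac1p$, so $q\in(0,1]$) I would split the interval at the repelling point $q$. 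On $[-1,q]$, which still avoids the pole, the same monotone argument applies verbatim (its endpoint input $f(-1)\ge-1$ still holds), so every start strictly below $q$ converges to $p$ --- Case~2a --- while $f(q)=q$ gives Case~2c. For a start in $(q,1]$ (Case~2b), repulsion from $q$ plus the absence of a further fixed point yields $f(y)>y$, so the orbit is strictly increasing; I would argue it must, after finitely many steps, reach the threshold $\hat y\le 1$ where $f$ first attains the value $1$ (so a clip sets $y=1$) or cross $y^{*}$ (so the $\mathrm{sgn}$ branch sets $y=1$) --- otherwise it would converge to a fixed point of $f$ in $(q,1)$, of which there are none --- and once the opinion equals $1$ it is frozen there, because either $f(1)\ge 1$ (the sign check $(p-1)(\beta p+1)\ge 0$ holds throughout Case~2) or $1\ge y^{*}$.

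I expect the main obstacle to be the bookkeeping forced by the non-smooth features in Case~2b: one must track whether the pole $y^{*}$ sits inside or outside $[-1,1]$ (two sub-cases), verify that $f$ actually blows up to $+\infty$ as $y\uparrow y^{*}$ (a sign check, so that the threshold $\hat y<y^{*}$ genuinely exists and the clip is genuinely triggered rather than the orbit drifting to a finite interior limit), and show cleanly that the increasing orbit escapes to the boundary in finitely many steps. The smooth part --- locating the two fixed points and reading their stability off $f'(p)$ and $f'(q)$ --- is routine, as are the degenerate cases $p=-1$ and $\beta|p|=1$ (where $q=1$ and Case~2b is vacuous).
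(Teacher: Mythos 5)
Your proposal is correct and follows essentially the same route as the paper's own proof: identify the update as a monotone M\"obius map, locate its two fixed points $p$ and $-\frac{1}{\beta p}$, classify them as attracting/repelling via the derivative (your $f'(y)=\frac{w(w+1+\beta p^2)}{(\beta p y+w+1)^2}$ is in fact the correct expression; the paper's appendix drops a term), and then read off the orbit behaviour from the sign of $f(y)-y$ on the intervals cut out by the repelling point, with the $\mathrm{sgn}$ branch handled separately. Your treatment of the clipping, the pole, and the finite-time escape to the boundary in Case 2b is somewhat more careful than the paper's, but the underlying argument is the same.
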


\begin{figure}[tp]
\centering
\includegraphics[width = 0.7\hsize]{./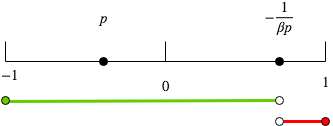}
\caption{Graphical illustration of Case 2 from Theorem~\ref{thm:singleagentfixedenvironment} (i.e. $p<0$ and $\beta \geq -1/p$).
[Case 2a:] For values of $y(t)$ in the green range, $y(t)$ will converge to $y=p$.
[Case 2b:] For values of $y(t)$ in the red range, $y(t)$ will diverge to $y=1$.
[Case 2c:] For $y(t)=-\frac{1}{\beta p}$, $y(t)$ will not change such that $y=-\frac{1}{\beta p}$.}
\label{fig:Theorem1EG}
\end{figure}

Theorem~\ref{thm:singleagentfixedenvironment} already suggests that opinions under the \modelName{} model
evolve to one of three possible states: consensus (Case 1 and Case 2a),
polarization (Case 2b), and an unstable state of persistent disagreement (Case 2c).

\subsection{Polarization and consensus for general networks and initial opinions}
Here we extend from the single agent to a group of individuals that can update their opinions at any time step $t$. The dynamics of polarization are investigated theoretically with respect to different values of the entrenchment parameter. It was argued by the authors of the BOF model that homophily alone, without biased assimilation was not sufficient for polarization~\cite{dandekar2013biased}. In our BEBA model, the backfire effect and biased assimilation, without homophily, are sufficient to lead to polarization or consensus, depending on the parameters and the initial opinions.
The theorem below makes this clear, by providing easy-to-realize sufficient conditions for polarization or consensus to occur.
\begin{theorem}\label{theo:polarization}
Let $G = (V, E)$ be any connected unweighted undirected network. For all $i \in V$, $y_i(t) \in (-1,0)\cup (0,1)$ is the opinion of node $i$ at time $t$, let $w_{ii} = 1$ and $\beta_i = \beta >0$ for all $i \in V$. Denote $\mathbf{y}(t)$ the opinion vector of $G$ at time $t$, $ \left | \mathbf{y}(t) \right | $ is the vector with the absolute values of all opinions, and $\mathrm{min}(\mathbf{y}(t))$ is the minimum element in $\mathbf{y}(t)$. Then,
\begin{enumerate}[noitemsep,topsep=0pt,parsep=0pt,partopsep=0pt]
\item Polarization: If $\beta > \frac{1}{\left [ \mathrm{min} \left ( \left | \mathbf{y}(0) \right | \right ) \right ]^2}$, $\forall i \in V$, $ \left | y_i\right |= 1$.
\item Consensus: If $\beta < \frac{1}{\left [ \mathrm{max} \left ( \left | \mathbf{y}(0) \right | \right ) \right ]^2}$, there exists a unique $y^* \in [-\mathrm{max} \left ( \left | \mathbf{y}(0) \right | \right ) ,\mathrm{max} \left ( \left | \mathbf{y}(0) \right | \right ) ]$ such that $ y_i = y^*$, $\forall i \in V$.
\end{enumerate}
\end{theorem}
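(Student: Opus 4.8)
\emph{Proof plan.} I would treat the two parts separately, since they live in opposite regimes (all edge weights positive versus some edge weights below $-1$).

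\emph{Consensus (part 2).} Put $M_0:=\max(|\mathbf{y}(0)|)$, so the hypothesis reads $\beta M_0^2<1$. The plan is to show that in this regime the \modelName{} update is, at every step, an honest convex combination, after which consensus is a standard ergodicity statement. First I would prove by induction on $t$ that $|y_i(t)|\le M_0$ for all $i$: given this at time $t$, every weight satisfies $w_{ij}(t)=\beta y_i(t)y_j(t)+1\ge 1-\beta M_0^2>0$ and $w_{ii}+\sum_{j\in N(i)}w_{ij}(t)\ge 1>0$, so neither the sign rule nor the clipping is ever invoked and $y_i(t+1)$ is a weighted average of $y_i(t)$ and the $y_j(t)$, $j\in N(i)$, with strictly positive weights summing to one; hence $|y_i(t+1)|\le M_0$, closing the induction. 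Writing $\mathbf{y}(t+1)=A(t)\mathbf{y}(t)$ with $A(t)$ row-stochastic, each $A(t)$ has the same sparsity pattern (the closed neighbourhoods of the connected graph $G$) and all of its nonzero entries are bounded below by a positive constant $\eta$ depending only on $G,\beta,M_0$. Then $\max_i y_i(t)$ is non-increasing, $\min_i y_i(t)$ is non-decreasing, and the product of any $\mathrm{diam}(G)$ consecutive matrices is entrywise positive, hence has Dobrushin coefficient at most $1-\eta^{\mathrm{diam}(G)}<1$; therefore $\max_i y_i(t)-\min_i y_i(t)\to 0$ geometrically and all $y_i(t)$ converge to a common value $y^*\in[\min_i y_i(0),\max_i y_i(0)]\subseteq[-M_0,M_0]$. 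Uniqueness of the consensus value is then immediate.

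\emph{Polarization (part 1).} Write $m_t:=\min(|\mathbf{y}(t)|)$; the hypothesis is $\beta m_0^2>1$. The first step is an induction on $t$ establishing that (i) $\mathrm{sgn}(y_i(t))=\mathrm{sgn}(y_i(0))$ for every $i$, and (ii) $m_t$ is non-decreasing (so that $\beta m_t^2>1$ persists). For the inductive step fix a node $k$, say with $y_k(t)>0$. A neighbour $j$ of the same sign has $w_{kj}(t)>1>0$; a neighbour $j$ of the opposite sign has $\beta|y_k(t)\,y_j(t)|\ge\beta m_t^2>1$, hence $w_{kj}(t)<0$ and $w_{kj}(t)\,y_j(t)>0$. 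Thus the numerator $y_k(t)+\sum_{j\in N(k)}w_{kj}(t)y_j(t)$ is strictly positive, which forces $y_k(t+1)>0$ when the denominator is positive and $y_k(t+1)=\mathrm{sgn}(y_k(t))=1$ when the sign rule fires, giving (i). For (ii), the identity $\mathrm{num}_k-m_t\,\mathrm{denom}_k=(y_k(t)-m_t)+\sum_{j\in N(k)}w_{kj}(t)\bigl(y_j(t)-m_t\bigr)$ has every summand $\ge 0$ (same-sign $j$: $y_j(t)\ge m_t$ and $w_{kj}(t)>0$; opposite-sign $j$: $y_j(t)<0<m_t$ and $w_{kj}(t)<0$), so $|y_k(t+1)|\ge m_t$, and this survives the clipping at $1$; taking the minimum over $k$ gives $m_{t+1}\ge m_t$. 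Hence $m_t\uparrow m^*\le 1$, and everything reduces to showing $m^*=1$.

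For that last step the plan is to pass to an accumulation point $\mathbf{y}^*$ of $(\mathbf{y}(t))$ (available by compactness of $[-1,1]^n$) and, using continuity of the update along the relevant orbit together with $m_t\to m^*$, to conclude that the whole forward orbit of $\mathbf{y}^*$ has minimum absolute value identically $m^*$; re-using the equality case of the computation above at a node of minimal absolute value in $\mathbf{y}^*$ forces all its neighbours to share its sign and absolute value $m^*$, and propagating along the connected $G$ shows $\mathbf{y}^*=\pm m^*\mathbf{1}$ is a constant vector. One then shows that such a non-extreme constant configuration is a repelling fixed point precisely when $\beta(m^*)^2>1$ — the network analogue of the instability of Case~2c in Theorem~\ref{thm:singleagentfixedenvironment} — so it cannot be approached from the initial data under consideration, whence $m^*=1$ and $|y_i|=1$ for all $i$. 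I expect this instability argument (equivalently, a quantitative strict-increase estimate for $m_t$ away from the constant equilibria), together with the bookkeeping needed to handle the discontinuities of the update map (denominator crossing $0$, clipping at $\pm1$) along the limiting orbit, to be the main obstacle; steps (i)--(ii) and the consensus argument are routine by comparison.
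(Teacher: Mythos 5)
Your skeleton coincides with the paper's own proof in the appendix --- in the consensus regime all weights stay positive and $[-M_0,M_0]$ is invariant; in the polarization regime signs are preserved and $m_t=\min(|\mathbf{y}(t)|)$ is monotone --- but the two halves land in different places relative to it. For consensus, the paper only argues that the currently most opinionated node strictly decreases in absolute value and then asserts that consensus follows; your reformulation as a product of row-stochastic matrices with nonzero entries uniformly bounded below, closed via the Dobrushin coefficient of $\mathrm{diam}(G)$-fold products, is a genuinely complete argument where the paper's is not (monotonicity of the max and the min alone does not force them to meet). For polarization, your steps (i)--(ii) are the paper's lemma in different notation (the paper writes the update as $y_i(t)\,C/D$ and claims $C/D>1$ at the most moderate node; that strict inequality already fails when this node has only same-sign neighbours all equal to it, an equality case your computation correctly isolates).

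The one real gap is the endgame $m^*=1$, and it is exactly the point at which the paper also stops, writing ``until it reaches the absolute value of $1$'' with no argument; so you have correctly located the hard step, but the mechanism you propose for it is not the right one. A constant configuration $\pm m^*\mathbf{1}$ with one sign has all edge weights equal to $\beta (m^*)^2+1>0$, so it is a \emph{stable} fixed point of the resulting positive-weight averaging, not a repelling one; indeed, if all of $\mathbf{y}(0)$ carries a single sign the polarization claim is simply false (e.g.\ $\mathbf{y}(0)=0.5\cdot\mathbf{1}$ with $\beta>4$ is stationary), so the theorem tacitly assumes both signs occur initially. The ingredient that actually closes the argument is one you already established: by step (i) a mixed-sign orbit stays mixed-sign, and your equality-case analysis shows that a node of minimal absolute value fails to strictly increase only if all its neighbours share its sign and its absolute value; propagating this along the connected graph forces a single sign, a contradiction. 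What remains is quantitative bookkeeping --- a uniform lower bound on the increase of $m_t$ over windows of $\mathrm{diam}(G)$ steps (the strict gain $2m_t(\beta m_t^2-1)/(1+d_i(1+\beta))$ at any minimal node with an opposite-sign neighbour takes up to $\mathrm{diam}(G)$ iterations to reach minimal nodes buried inside a same-sign region) --- rather than a new idea. With the repelling-fixed-point step replaced by this sign-propagation argument, your plan is sound and, in both halves, more rigorous than the paper's own proof.
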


A special case of particular theoretical interest is when $\mathrm{min} \left ( \left | \mathbf{y}(0) \right | \right )  = \mathrm{max} \left ( \left | \mathbf{y}(0) \right | \right )$.
Then there are only two different initial opinions in the network,
with the same absolute value but opposite signs (i.e. they could represent 'for' and 'against' an issue of interest).
In this case, the sufficient conditions also become necessary conditions,
and a borderline situation emerges to which we refer as \emph{persistent disagreement}.
It can be proved concisely by relying on Theorem~\ref{theo:polarization}, and thus we state it as a Corollary:
\begin{corollary} \label{corollary:polarization}
Let $G = (V_1, V_2, E)$ be any connected unweighted undirected network. For all $i \in V = V_1 \cup V_2$, let $w_{ii} = 1$ and $\beta_i = \beta > 0$. Assume for all $i \in V_1$, $y_i(0) = y_0$, where $0 < y_0 < 1$; while for all $i \in V_2$, $y_i(0) = - y_0$. Then,
\begin{enumerate}[noitemsep,topsep=0pt,parsep=0pt,partopsep=0pt]
\item Polarization: If $\beta > \frac{1}{y_0^2}$, $\forall i \in V_1\cup V_2$, $ \left | y_i \right | = 1$.
\item Persistent disagreement: If $\beta = \frac{1}{y_0^2}$ (i.e., when $w_{ij} = 0$ if $i \in V_1$ and $j \in V_2$), $\forall i \in V_1$, $y_i(t') = y_0$ for all $t' \geq 0$, and $\forall i \in V_2$, $y_i(t') = -y_0$ for all $t' \geq 0$.
\item Consensus: If $\beta < \frac{1}{y_0^2}$, then there exists a unique $y^* \in (-y_0, y_0)$ such that $\forall i \in V$, $ y_i = y^*$.
\end{enumerate}
\end{corollary}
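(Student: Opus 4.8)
The plan is to obtain Cases~1 and~3 directly from Theorem~\ref{theo:polarization}, and to dispatch the borderline Case~2 by a short computation of the edge weights. The bridge is the observation that the hypotheses of the Corollary are exactly the instance of Theorem~\ref{theo:polarization} in which $\min(|\mathbf{y}(0)|) = \max(|\mathbf{y}(0)|) = y_0$: every node starts at $y_0$ or $-y_0$, all self-weights are $1$, and all entrenchments equal $\beta$.

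For Case~1 (polarization), $\beta > 1/y_0^2$ is exactly $\beta > 1/[\min(|\mathbf{y}(0)|)]^2$, so Theorem~\ref{theo:polarization}(1) gives $|y_i| = 1$ for every $i \in V_1 \cup V_2$. For Case~3 (consensus), $\beta < 1/y_0^2$ is exactly $\beta < 1/[\max(|\mathbf{y}(0)|)]^2$, so Theorem~\ref{theo:polarization}(2) yields a unique $y^*$ with $y_i = y^*$ for all $i$ and $|y^*| \le y_0$. It remains to sharpen this to $y^* \in (-y_0, y_0)$, i.e.\ to exclude the endpoints. For this I would argue that $M_t \triangleq \max_i |y_i(t)|$ is non-increasing (each update is a convex combination of opinions, since in this regime every weight $w_{ij}(t) = \beta y_i(t)y_j(t) + 1 \ge 1 - \beta y_0^2 > 0$), and that it must strictly decrease before consensus is reached: the set of nodes still attaining the extreme value consists of the nodes whose entire closed neighborhood is ``monochromatic'', and since $G$ is connected and contains nodes of both signs, this set strictly shrinks at each step and hence becomes empty in finitely many steps; therefore $M_t < y_0$ for $t$ large, and $|y^*| \le M_t < y_0$.

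The only genuinely new content is Case~2, the borderline $\beta = 1/y_0^2$. Here the initial weights compute directly: an edge with both endpoints in $V_1$ (or both in $V_2$) has weight $\beta y_0^2 + 1 = 2$, while an edge between $V_1$ and $V_2$ has weight $1 - \beta y_0^2 = 0$ --- exactly the condition singled out in the statement. Plugging these into the \modelName{} update for a node $i \in V_1$, the cross edges drop out, the denominator is $1 + 2\,|N(i)\cap V_1| > 0$ (so the clipping/sign branch is not triggered), and the numerator is $y_0\,(1 + 2\,|N(i)\cap V_1|)$ because every surviving term carries opinion $y_0$; hence $y_i(1) = y_0$. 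The mirror-image computation gives $y_i(1) = -y_0$ for $i \in V_2$. The opinion vector is thus stationary after one step, and a trivial induction yields $y_i(t') = y_0$ for all $t' \ge 0$ and $i \in V_1$, and $y_i(t') = -y_0$ for all $t' \ge 0$ and $i \in V_2$.

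I expect the main obstacle to be not the borderline computation --- which is immediate --- but the strict openness of the consensus interval, since Theorem~\ref{theo:polarization}(2) only delivers the closed interval $[-y_0, y_0]$. Ruling out the endpoints needs the connectivity-driven ``shrinking monochromatic core'' argument sketched above (equivalently, a proof that $\max_i |y_i(t)|$ is eventually strictly below $y_0$). If the proof of Theorem~\ref{theo:polarization}(2) already shows that $\max_i |y_i(t)|$ strictly decreases until consensus, this step is free and the Corollary follows with essentially no extra work.
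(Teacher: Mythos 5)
Your proposal matches the paper's proof: Cases 1 and 3 are obtained by instantiating Theorem~\ref{theo:polarization} with $\min(|\mathbf{y}(0)|)=\max(|\mathbf{y}(0)|)=y_0$, and Case 2 is exactly the computation in the paper's appendix showing that cross-group edges get weight $\beta(-y_0^2)+1=0$ while same-group edges get weight $2$, so $y_i(1)=\frac{y_0(1+2n^s_i(0))}{1+2n^s_i(0)}=y_0$ and the vector is stationary. The one place you go beyond the paper --- excluding the endpoints $\pm y_0$ so that $y^*\in(-y_0,y_0)$ rather than the closed interval delivered by Theorem~\ref{theo:polarization}(2) --- addresses a point the paper leaves implicit, and your sketch (the maximum absolute opinion is non-increasing and strictly decreases at any node with an opposite-sign neighbor, which by connectivity eventually propagates to all nodes) is the right way to close it.
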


Intriguingly, these conditions in the Theorem and Corollary are independent of the network structure and depend only on the entrenchment parameter $\beta$ and the opinion vector at time $0$.
Yet, it should be noted that the value of the consensus and the eventual polarized state do depend on the network structure.
Moreover, the network structure, and the distribution of the opinions over it,
do determine whether polarization or consensus will arise when neither of the sufficient conditions of Theorem~\ref{theo:polarization} are satisfied.
These claims are confirmed in experiments in the next section.

\section{Experimental Analysis} 
\label{sec:experiments}
In Section~\ref{sec:theoreticalAnalysis} we provided sufficient conditions for our model to reach consensus or polarization. 
In this section we perform an experimental analysis of how these two phenomena manifest themselves on real and synthetic networks. Our goal is to answer the following questions:
\begin{itemize}[noitemsep,topsep=0pt,parsep=0pt,partopsep=0pt]
\item In the case that the network reaches consensus, what is the value of the consensus opinion, and how does the network structure, $\beta$, and the initial opinion vector affect this value? 
\item In the case that the opinions polarize, what is the state of the polarization and how is it affected by the initial opinions, $\beta$, and the structure of the networks? 
\end{itemize}

We use both real-world and synthetic networks in our experiments. The real datasets include Zachary's Karate Club network~\cite{zachary1977information} and six Twitter networks with given opinions for different events ranging from political elections to sports  activities~\cite{zarezade2017cheshire,de2016learning}. See the Appendix for details.
The synthetic networks are:
\begin{itemize}[noitemsep,topsep=0pt,parsep=0pt,partopsep=0pt]
\item Erd\H{o}s-R\'enyi (ER) networks $G(n,\rho)$ have binomial degree distributions, where $\rho$ is the edge connection probability between nodes~\cite{bollobas2001random}.
\item Watts-Strogatz (WS) networks $G(n,K, 1)$ have the small world property of with $K$ being the average degree, and we fix the rewiring probability to be $1$ (i.e., random graph), thus only refer to $K$~\cite{watts1998collective}.
\item Barab\'asi-Albert (BA) networks $G(n, M_0,M)$ are scale-free, where $M_0$ is the number of initial nodes and $M$ the number of nodes that a new node is connected to~\cite{albert2002statistical}.
\end{itemize}

\subsection{The influence of the entrenchment $\beta$}
From Theorem~\ref{theo:polarization}, we know the stationary opinion vector $\mathbf{y}$ of our model polarizes when $\beta > \frac{1}{[\mathrm{min} (\left | \mathbf{y} (0)\right |)]^2}$, and reaches consensus when $\beta < \frac{1}{[\mathrm{max} (\left | \mathbf{y} (0)\right |)]^2}$. 
However, these limits are far away from each other and polarization may occur at much lower values of $\beta$ in practice, similarly consensus for higher $\beta$. We now take the Karate network as an example and examine the relation between $\beta$ and polarization experimentally using random initial opinion vectors.

\begin{figure}[tp]
	\centering
	\includegraphics[width = 1.0\hsize]{./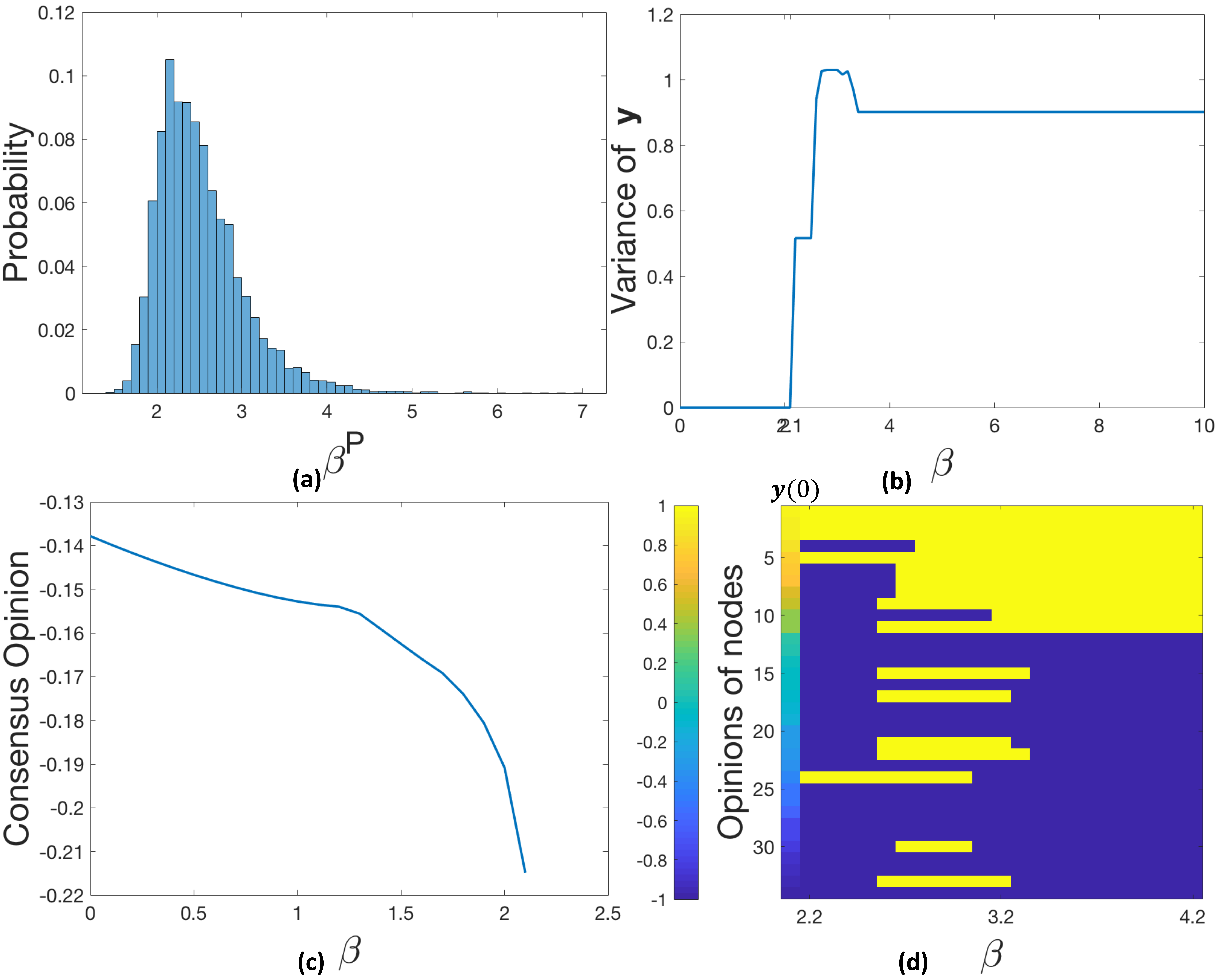}
	\caption{For the Karate network: (a) the distribution of $\beta^P$ (i.e., the smallest $\beta$ that results in polarization) for $10000$ random opinion vectors (uniform on $\left [ -1,1\right ]$); for one opinion vector, (b) the variance of all converged $\mathbf{y}$ as $\beta$ increases from $0$ to $10$; (c) consensus opinion values for $\beta \in [0,2.1]$; (d) final opinions for each of the nodes.}
	\label{fig:E1Result}
\end{figure}

Let $\beta^P$ denote the threshold between consensus and polarization for any pair of network and opinion vector. 
Figure~\ref{fig:E1Result}(a) shows the distribution of the empirical $\beta^P$ values for 10000 different random opinion vectors, where $y_i(0)$ is uniform within $[-1, 1]$.
We observe that the threshold for polarization is much smaller than the theoretical value, which should be lager than $10^4$. 
However, the empirical value of $\beta^P$ is below $5$ for most of the $\mathbf{y}(0)$, and never exceeds $7$. 

In Figure~\ref{fig:E1Result}(b), the variance of the stationary opinion vector is plotted as a function of $\beta$, for one of the opinion vectors. When there is consensus the variance is zero, while when the variance is greater than zero, polarization is obtained (i.e., different variances correspond to different polarized states). We observe that as $\beta$ increases, the opinion vector converges from consensus to polarized states. Empirically, no persistent disagreement is achieved. For this $\mathbf{y}(0)$, polarization is shown if $\beta > 2.1$ such that $\beta^P = 2.1$. 

When reaching consensus, Figure~\ref{fig:E1Result}(c) shows that the consensus value becomes less neutral as $\beta$ increases. This is true for $78.74\%$ of the 10000 vectors on Karate network. 
Meanwhile, different $\beta$s do not necessarily result in the same polarized state (see Figure~\ref{fig:E1Result}(d)). The heatmap shows different polarized states for different values of $\beta$ for this $\mathbf{y}(0)$.

\subsection{The influence of the opinion vector $\mathbf{y}(0)$}
In this experiment, we investigate the influence $\mathbf{y}(0)$ on the consensus opinion value and the mean polarized opinion.
Figure~\ref{fig:E2Result} shows that the consensus value and the mean polarized opinion are strongly correlated to the mean of $\mathbf{y}(0)$. 
Meanwhile, Figure~\ref{fig:E2Result}(b) shows that in the case of polarization, opinion vectors with similar initial means may result in quite different polarized states because the placements of the opinions on nodes differ. Also, $\mathbf{y}(0)$ with different means may result in similar polarization (i.e., mean polarized opinion). 

\begin{figure}[tp]
\centering
\includegraphics[width = 1.0\hsize]{./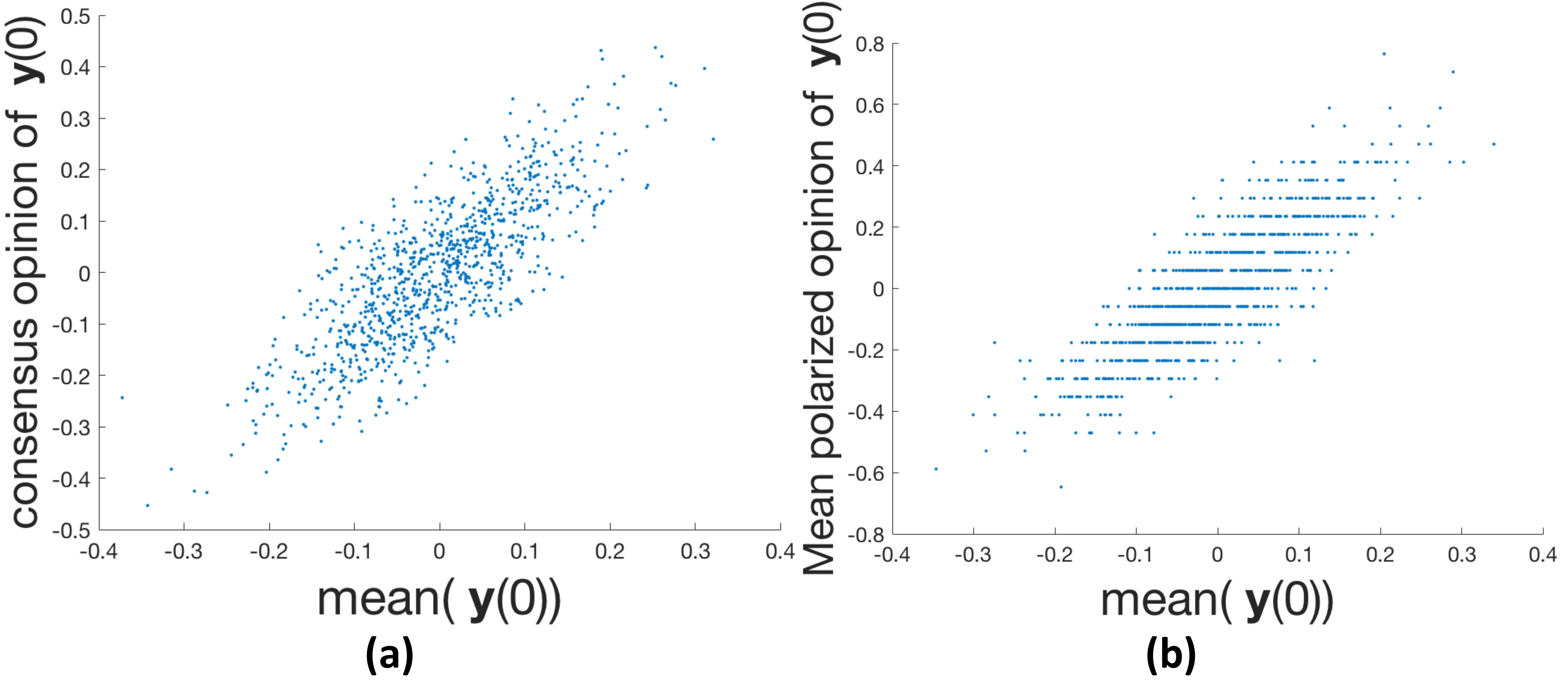}
\caption{For $1000$ random $\mathbf{y}(0)$ on Karate network: (a) consensus opinion when $\beta = 1$; (b) mean polarized opinion when $\beta = 10$.}
\label{fig:E2Result}
\end{figure}

\begin{figure}[tp]
	\centering
	\includegraphics[width = 1.0\hsize]{./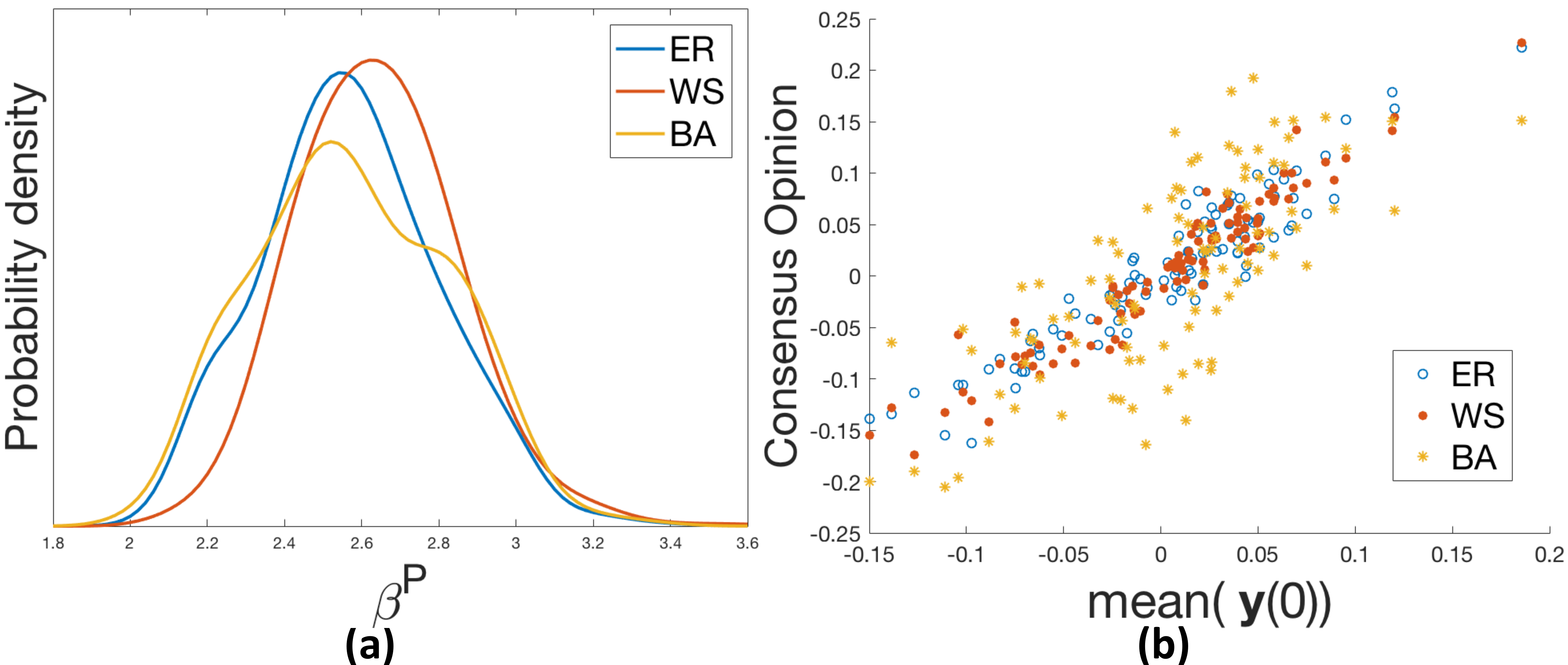}
	\caption{Based on one ER ($n = 100, \rho = 0.0606$), one WS ($n = 100, K = 3$), and one BA network ($n = 100, M_0 = 4, M = 3$): (a) distribution of $\beta^P$ for $1000$ random opinion vectors; (b) for $100$ opinion vectors, mean $\mathbf{y}(0)$ vs. the consensus value ($\beta = 1$).}
	\label{fig:E31Result}
\end{figure}

Then we analyze two real datasets Tw:Club (i.e., Barcelona getting the first place in La-liga 2016) and Tw: Sport (Champions League final in 2015 between Juventus and Real Madrid), which have the same network but different initial opinion vectors. It is found that the $\beta^P$ is 11.7 for Tw:Club and 3.3 for Tw:Sport, which indicates the Champions League final gets polarized more easily than the other event.

\subsection{The influence of the network topology $G$}
In this experiment, we study how the topology affects the $\beta^P$ for the same (set of) $\mathbf{y}(0)$, as well as the stationary opinion vectors of our model.
To this end, we generated networks with the three random network models, with the same number of nodes, intialized with the same opinion vectors. 

We observe that for networks with the same number of nodes and similar numbers of edges, different network properties result in different dynamics of polarization. Figure~\ref{fig:E31Result}(a) shows that for the same set of $\mathbf{y}(0)$, the distributions of the $\beta^P$ value for the three models. It shows that the $\beta^P$ has a larger mean in the WS model, indicating networks with this structure may be more robust against polarization.
We also observe the standard deviation of the $\beta^P$ values for the BA distribution is larger, which appears to be due to 'hub' nodes, whose opinions strongly affect the value of $\beta^P$.

Figure~\ref{fig:E31Result}(b) plots the consensus values reached by a set of 100 random opinion vectors on the three networks. The shapes of scatter plots become increasingly compact from the BA model, the ER model, to the WS model, corroborating the larger variance in the opinion dynamics on BA networks.

The parameters in each model also affect the dynamics, see the supplement in the Appendix. 
For example, when the edge probability $\rho$ in the ER model increases from a small number, which guarantees a connected network, to $1$, $\beta^P$ varies less for ER models with similar $\rho$. 
The experimental results are similar for the consensus value, and the polarized opinion. 
Not only the number of edges has an influence on the dynamics of polarization, but also the placement of the edges.

\subsection{Real-world dataset analysis}
\begin{table}[tp]
\centering
\caption{$\beta^P$ for real-world twitter datasets}
\label{tab:realBetaP}
\begin{tabular}{cc|cc|cc}
\hline
Network  & $\beta^P$ & Network  & $\beta^P$ & Network  & $\beta^P$ \\ \hline
Tw:GoT & 2.9 & Tw:Club  & 3.3  & Tw:US    & 4.9 \\
Tw:UK    & 7.5  & Tw:Delhi & 7.7 & Tw:Sport & 11.7 
\end{tabular}
\end{table}
Based on the six real-world twitter datasets~\cite{zarezade2017cheshire,de2016learning}, we investigate how easily each event gets polarized opinions, namely the value of $\beta^P$. It is shown in Table~\ref{tab:realBetaP} that political events are apparently less likely to polarize, except the US one. While the sport or TV events are more likely get polarized, except when people had to bet instead of supporting (i.e., Tw:Club).

\section{Conclusion and Future Work} \label{sec:conclusion}

Modeling how opinions evolve when individuals interact in social networks is an important computational social science challenge that has received renewed attention recently. The availability of realistic models of this type may have substantial real-life impact on a variety of applications, from political campaigns design, to conflict prevention and mitigation.

A large number of models have been proposed in the literature. To the best of our knowledge, however, none of them model the so-called Backfire Effect: the fact that individuals, when exposed to a strongly opposing view, will not be moderated, but rather become more entrenched in their opinion.

Here we proposed the \modelName{} model, which models both Biased Assimilation and Backfire Effect. It is governed by one parameter (which can vary over the individuals), called the entrenchment parameter, determining the strength of both. The \modelName{} model naturally generates different behaviors: from convergence to a consensus, to polarization.

Theoretical and empirical analyses demonstrate that the resulting model is not only realistic, its behavior also provides an interesting view on the interplay between network structure, the entrenchment parameter, and the opinions.

These properties make the \modelName{} model a useful tool for simulating the effect of interventions, such as editing the network (e.g. by facilitating communication between particular pairs of individuals), altering the initial opinions (e.g. through targeted information campaigns), or affecting the entrenchment of particular individuals (e.g. through education).

\ptitle{Acknowledgements} This work was supported by
the ERC under the
EU's Seventh Framework Programme (FP7/2007-2013) / ERC
Grant Agr. no.\ 615517,
FWO (project no.\ G091017N, G0F9816N),
the EU's Horizon 2020 research and innovation programme and the FWO under the Marie Sklodowska-Curie Grant Agr. no.\ 665501.

\bibliographystyle{named}
\bibliography{main}

\appendix

\section{Proof of Theorem~\ref{thm:singleagentfixedenvironment}} \label{proof:theorem1}
\subsection{Only one node in the environment}
Recall that there is one node with a fixed opinion $p \in [-1,1]$ in the environment. The opinion of the agent is updated as mentioned in Section~\ref{sec:theoreticalAnalysis}.
\begin{lemma}
If $w + \beta p y(t) + 1  \leq 0$, the opinion of the agent stays at $\mathrm{sgn}(y(t))$ for all $t'>t$.
\end{lemma}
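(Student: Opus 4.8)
The plan is to prove this by induction on $t'$, showing that once the first (``saturation'') branch of the \modelName{} update rule is triggered at time $t$, it stays triggered at every later step, so the opinion is frozen. Concretely, I would show that $w+\beta p\,y(t')+1\le 0$ holds for all $t'\ge t$, which by the update rule forces $y(t'+1)=\mathrm{sgn}(y(t'))$, and then check that the common sign is preserved.

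First I would extract the sign information contained in the hypothesis. Since $w=w_{ii}\ge 0$ and $\beta>0$, the inequality $w+\beta p\,y(t)+1\le 0$ gives $\beta p\,y(t)\le -(w+1)<0$, hence $p\,y(t)<0$; in particular $y(t)\neq 0$ and $p\neq 0$, so $s:=\mathrm{sgn}(y(t))\in\{-1,1\}$ and $s=-\mathrm{sgn}(p)$. By the definition of the update rule, $y(t+1)=s$. (The degenerate case $y(t)=0$ cannot occur, since then $w+\beta p\cdot 0+1=w+1>0$ contradicts the hypothesis; and because the saturation branch outputs a value in $[-1,1]$, the clipping step never interferes.)

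The crux — essentially the only non-bookkeeping step — is a monotonicity observation: because $|y(t)|\le 1$ and $y(t),p$ have opposite signs, $p\,y(t)=-|p|\,|y(t)|\ge -|p|=p\,s$, so $\beta p\,s\le\beta p\,y(t)$ and therefore
\[
w+\beta p\,s+1\;\le\;w+\beta p\,y(t)+1\;\le\;0 .
\]
In words, replacing $y(t)$ by the extreme value $s$ only makes the ``valid-region'' test more negative, so the saturation branch remains active; this yields $y(t+2)=\mathrm{sgn}(y(t+1))=\mathrm{sgn}(s)=s$. For the inductive step, suppose $y(t')=s$ for some $t'\ge t+1$; since $|s|=1$, the displayed inequality (with $y(t')=s$) gives $w+\beta p\,y(t')+1\le 0$, hence $y(t'+1)=\mathrm{sgn}(y(t'))=s$. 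Thus $y(t')=s=\mathrm{sgn}(y(t))$ for all $t'>t$. I do not expect a genuine obstacle here; the only point requiring a little care is the monotonicity inequality above, i.e. recognizing that $|y(t')|\le 1$ keeps $\beta p\,y(t')$ at its most negative when $|y(t')|=1$. This lemma will then feed directly into the proof of Case 2b of Theorem~\ref{thm:singleagentfixedenvironment}.
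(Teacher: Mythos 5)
Your proof is correct and follows essentially the same route as the paper's: both extract $py(t)<0$ from the hypothesis and then use the monotonicity observation that $|\mathrm{sgn}(y(t))|=1\ge|y(t)|$ with the same sign makes $\beta p\,y(t+1)\le\beta p\,y(t)\le -(w+1)$, so the saturation branch stays active forever. Your write-up is merely more explicit about the induction and the degenerate case $y(t)=0$.
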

\begin{proof}
As shown in the updating rule that when $w + \beta p y(t) + 1 \leq 0$, $y(t+1)= \mathrm{sgn}(y(t))$.
$w + \beta p y(t) + 1 \leq 0$ is equivalent to $\beta p y(t) \leq -w -1 < 0$.
Knowing that $\left | y(t+1)\right |  = 1 \geq \left | y(t)\right |$,
$$\beta p y(t+1) \leq -w -1$$
Therefore, $y(t') = \mathrm{sgn}(y(t+1)) = \mathrm{sgn}(y(t))$ for all $t'>t$.
\end{proof}

\begin{lemma}
If $w + \beta p y(t) + 1  > 0$, there exist two fixed points where $y(t+1) = y(t)$: $p$ and $-\frac{1}{\beta p}$. 
$p$ is attracting while $-\frac{1}{\beta p}$ is repelling.
\end{lemma}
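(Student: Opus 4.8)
The plan is to treat the ``otherwise'' branch of the update rule as a Möbius (linear-fractional) map in $y(t)$ and use the standard formula for its derivative. Write the second branch as
\[
f(y) \;=\; \frac{(w + \beta p^2)\,y + p}{\beta p\,y + (w+1)},
\]
so that $y(t+1) = f(y(t))$ exactly when $w + \beta p\,y(t) + 1 > 0$, i.e.\ when the denominator is positive. First I would locate the fixed points by solving $f(y) = y$: clearing the denominator and cancelling the $wy$ terms, the resulting quadratic factors as $(p - y)(\beta p\,y + 1) = 0$, so the only solutions are $y = p$ and $y = -\tfrac{1}{\beta p}$ (the latter being meaningful precisely when $p \neq 0$, which is the regime of interest; $p = 0$ falls under Case~1 of Theorem~\ref{thm:singleagentfixedenvironment}). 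Since $\beta > 0$ these two points are always distinct. One should also check the validity condition at each: at $y = p$ it reads $w + \beta p^2 + 1 > 0$, and at $y = -\tfrac{1}{\beta p}$ it reads $w > 0$.

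For the stability claim I would compute $f'$. For a Möbius map $\frac{ay+b}{cy+d}$ one has $f'(y) = \frac{ad-bc}{(cy+d)^2}$; here a short computation gives $ad - bc = w\,(w + 1 + \beta p^2)$, hence
\[
f'(y) \;=\; \frac{w\,(w + 1 + \beta p^2)}{\bigl(\beta p\,y + w + 1\bigr)^2}.
\]
Evaluating at $y = p$ the denominator equals $(w + 1 + \beta p^2)^2$, so $f'(p) = \dfrac{w}{w + 1 + \beta p^2}$, which lies strictly between $0$ and $1$ because $w > 0$ and $\beta p^2 \ge 0$; thus $p$ is attracting. Evaluating at $y = -\tfrac{1}{\beta p}$ the denominator collapses to $w^2$, so $f'\!\bigl(-\tfrac{1}{\beta p}\bigr) = \dfrac{w + 1 + \beta p^2}{w} = 1 + \dfrac{1 + \beta p^2}{w} > 1$; thus $-\tfrac{1}{\beta p}$ is repelling.

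The computations above are routine; the step I expect to need the most care is the bookkeeping of the domain of validity. The derivative test only certifies local attraction/repulsion for iterates that actually stay inside the region $\{\,y : w + \beta p\,y + 1 > 0\,\}$ and are not clipped at $\pm 1$, so ``attracting/repelling'' must be read as a statement about genuine iterates of the ``otherwise'' branch. Continuity of $y \mapsto w + \beta p\,y + 1$ provides a neighbourhood of each fixed point on which that branch is the one in force, which suffices for the qualitative conclusion of the lemma. Converting this local picture into the global trichotomy — convergence to $p$, divergence to $\mathrm{sgn}(y(t))$, or the knife-edge constant trajectory — including what happens once $w + \beta p\,y + 1 \le 0$ is triggered, is then carried out in the proof of Theorem~\ref{thm:singleagentfixedenvironment}, using the preceding lemma that the $\mathrm{sgn}$ branch is absorbing.
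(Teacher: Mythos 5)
Your proof is correct and follows essentially the same route as the paper's: locate the fixed points by solving the quadratic $f(y)=y$ and classify them via the derivative of the linear-fractional map evaluated at each. Your derivative numerator $ad-bc=w(w+1+\beta p^2)$ is in fact the correct one (the paper prints $w(w+\beta p^2)$, an algebra slip that does not affect the conclusions $f'(p)<1$ and $f'(-\tfrac{1}{\beta p})>1$), and your extra care about the domain of validity only strengthens the argument.
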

\begin{proof}
The converged opinion $y$ of the agent should satisfy
$$f(y) = \frac{wy + \beta p^2y + p}{  w + \beta p y + 1 }$$
\begin{align}
f(y) - y = \frac{-\beta p y^2 + (\beta p^2 - 1)y + p}{w +\beta p y  +1} = \frac{u(y)}{v(y)}  = 0\label{al:fyminusy}
\end{align}
where 
\begin{align*}
u(y) & = -\beta p y^2 + (\beta p^2 - 1)y + p \\
v(y) & = \beta p y + w +1
\end{align*}
By solving $u(y) = 0$, which is equivalent to $f(y) - y = 0$ since $u(y) > 0$, the two fixed points of $f(y)$ are:
$p$ and $-\frac{1}{\beta p}$.

Next, we prove that $p$ is attracting and $-\frac{1}{\beta p}$ is repelling. 
$$f'(y) = \frac{w(w + \beta p^2)}{(w + \beta p y +1)^2} \geq 0$$
$\left | f'(y) \right | = f'(y)$, then $f'(p) = \frac{w(w + \beta p^2)}{(w + \beta p^2 +1)^2} < 1$, thus attracting; while $f'( y-\frac{1}{\beta p}) = \frac{w(w + \beta p^2)}{w^2} > 1$, thus repelling.
\end{proof}

\begin{lemma}
If $w + \beta p y(t) + 1  > 0$ and $py(t) \geq 0$, $y = p$.
\end{lemma}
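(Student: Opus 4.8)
\emph{Proof idea.} The plan is to show that, under the two hypotheses, the orbit $y(t),y(t+1),\dots$ stays in an interval $I$ on which the preceding lemma applies, is monotone there, and hence converges to the unique fixed point of $f$ inside $I$, namely $p$. First I would dispose of $p=0$: then $f(y)=\tfrac{wy}{w+1}$, the hypothesis $w+\beta p y(t)+1>0$ holds trivially, and $|y(t)|$ decays geometrically, so $y=0=p$. For $p\neq 0$ it suffices to treat $p<0$, the case $p>0$ following from the substitution $(y,p)\mapsto(-y,-p)$. When $p<0$, the hypothesis $py(t)\ge 0$ forces $y(t)\le 0$; since the agent's opinion always lies in $[-1,1]$, we may take $I=[-1,0]$. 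On $I$ we have $\beta p y\ge 0$, hence the denominator $v(y)=\beta p y+w+1\ge w+1>0$, so on $I$ the update is always the rational map $f$ (the $\mathrm{sgn}$ branch is never used), and the repelling fixed point $-\tfrac1{\beta p}>0$ lies outside $I$.

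Next I would read the sign of $f(y)-y=u(y)/v(y)$ off the quadratic $u(y)=-\beta p y^2+(\beta p^2-1)y+p$. Its leading coefficient $-\beta p$ is positive and its roots are $p$ and $-\tfrac1{\beta p}$, so $u>0$ on $[-1,p)$ and $u<0$ on $(p,0]$; since $v>0$ this gives $f(y)>y$ for $y\in[-1,p)$, $f(y)<y$ for $y\in(p,0]$, and $f(p)=p$. Combining this with the monotonicity $f'\ge 0$ established in the previous lemma and with $f(p)=p$: if $y(t)\in[-1,p)$ then $y(t)<f(y(t))<f(p)=p$, while if $y(t)\in(p,0]$ then $p<f(y(t))<y(t)$. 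In both cases the orbit stays in $I$ (so, in particular, the values never leave $[-1,1]$ and are never clipped) and is monotone, hence converges; by continuity of $f$ its limit is a fixed point of $f$ lying in $I$, which can only be $p$. Therefore $y=p$.

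The argument is elementary; the only point needing care is the confinement bookkeeping — that the orbit never leaves $I$, so that neither clipping nor the $\mathrm{sgn}$ branch is ever triggered, and that the monotone limit is pinned to $p$ rather than drifting toward the repelling point $-\tfrac1{\beta p}$. Both follow at once from the fact that $p$ is the sole fixed point of $f$ in $I$ together with the monotonicity of $f$, so I do not expect a genuine obstacle here.
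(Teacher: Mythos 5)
Your proposal is correct and follows essentially the same route as the paper: dispose of $p=0$ via the linear contraction $y(t+1)=\tfrac{w}{w+1}y(t)$, then read the sign of $f(y)-y=u(y)/v(y)$ on either side of $p$ to get monotone motion toward $p$. The only difference is that you make explicit the convergence bookkeeping (monotonicity of $f$ confines the orbit to the interval, so the monotone limit is a fixed point that can only be $p$), which the paper's proof asserts informally with ``increases until it reaches $p$''; this is a welcome tightening rather than a different argument.
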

\begin{proof}
If $p = 0$, $y(t+1) = \frac{w}{w+1}y(t)$, as the iteration goes, $\lim_{t \rightarrow \infty} y(t) = 0$;

If $py(t) > 0$, e.g., they are both positive
\begin{itemize}
\item when $0<y(t)<p$, $y(t+1) - y(t) = \frac{u(y(t))}{v(y(t))} > 0$, thus $y(t+1) > y(t)$, the agent's opinion increases until it reaches $p$;
\item when $p<y(t)< 1$, $y(t+1) - y(t) < 0$, the agent's opinion decreases to $p$.
\end{itemize}
\end{proof}
\begin{lemma}
If $w + \beta p y(t) + 1  > 0$ and $py(t) < 0$, 
\begin{enumerate}
\item If $\left| \frac{1}{\beta p} \right| > 1$, $\lim_{t \rightarrow \infty} y(t) = y^e$.
\item If $\left| \frac{1}{\beta p} \right | \leq 1$,
\begin{enumerate}
\item If $\left | y(t) \right | <  \left | \frac{1}{\beta p} \right |$, $y = p$.
\item If $y(t) = -\frac{1}{\beta p}$, $y(t')=-\frac{1}{\beta p}$ for all $t' \geq t$.
\item If $ \left | \frac{1}{\beta p}\right | < \left | y(t) \right | \leq 1$, $y = \mathrm{sgn}(y(t))$.
\end{enumerate}
\end{enumerate}
\end{lemma}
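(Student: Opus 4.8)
The plan is to follow the orbit $y(t), y(t+1), \dots$ of the update map $f(y) = \frac{wy + \beta p^2 y + p}{w + \beta p y + 1}$, using the two fixed points already identified in the previous lemma: the attracting one at $p$ and the repelling one at $q := -\frac{1}{\beta p}$. Since $p\,y(t) < 0$ and the theorem fixes $p \le 0$, we have $p < 0$ and $y(t) > 0$, so $q > 0$ and $\left|\frac{1}{\beta p}\right| = q$. The two facts I would lean on are: (i) $f$ is (weakly) increasing wherever its denominator $v(y) := \beta p y + w + 1$ is positive, as shown in the previous lemma; and (ii) the numerator of $f(y) - y$, namely $u(y) = -\beta p y^2 + (\beta p^2 - 1) y + p$, is an upward parabola with roots exactly $p$ and $q$, so $u(y) < 0$ on $(p,q)$ and $u(y) > 0$ outside $[p,q]$, while $v$ is strictly decreasing with $v(q) = w > 0$, hence $v > 0$ on $[0, q]$.

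For Case 1 ($q > 1$) and Case 2(a) ($q \le 1$ and $y(t) < q$) — which together just say $0 < y(t) < q$ — I would argue as follows. On $(0,q)$ we have $u < 0$ and $v > 0$, so $y(t+1) = f(y(t)) < y(t)$, and since $f$ is increasing with $f(q) = q$ also $y(t+1) < q$; furthermore $y(t+1) \ge f(0) = \frac{p}{w+1} \ge p$. Thus while the orbit stays in $(0,q)$ it is strictly decreasing; being bounded it would have to converge to a fixed point of $f$ in $[0,q)$, but there is none, so after finitely many steps the orbit enters $[p, 0]$. From there on $p\,y(t) \ge 0$ and $v(y(t)) \ge w+1 > 0$, so the earlier lemma (case $p\,y(t)\ge 0$) gives $y = p$; this is exactly the claimed limit in both Case 1 and Case 2(a).

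For Case 2(b) ($y(t) = q$, $q \le 1$): since $v(q) = w > 0$ the ``otherwise'' branch applies and $y(t+1) = f(q) = q$, so by induction $y(t') = q = -\frac{1}{\beta p}$ for all $t' \ge t$. For Case 2(c) ($q < y(t) \le 1$): here $u(y(t)) > 0$. If $v(y(t)) \le 0$ the model returns $y(t+1) = \mathrm{sgn}(y(t)) = 1$, and the first lemma of this section keeps the orbit at $1$, so $y = 1$. If $v(y(t)) > 0$, then $y(t+1)$ is either clipped to $1$ (and then remains $1$) or equals $f(y(t)) > y(t) > q$; iterating, the orbit is non-decreasing and bounded by $1$, hence converges to some $L \le 1$. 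I would then rule out $L$ being a fixed point in $(q,1]$ (there is none) and rule out $L = \bar y := \frac{w+1}{-\beta p}$ (the zero of $v$, which exceeds $q$), because $f(y) \to +\infty$ as $y \uparrow \bar y$ (there $u > 0$ and $v \to 0^+$), which would force an earlier clip to $1$ or entry into $\{v \le 0\}$. The only remaining option is that the orbit reaches $1$ — by clipping, by landing in $\{v \le 0\}$, or in the limit — so $y = \mathrm{sgn}(y(t)) = 1$.

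The hard part will be the bookkeeping in Case 2(c): one has to exclude the orbit converging to a non-equilibrium value inside the model's valid region and instead show it actually reaches the clipping boundary or the region where the denominator turns nonpositive. Monotonicity of $f$, the absence of fixed points on $(q, \bar y)$, and the blow-up of $f$ near $\bar y$ are precisely what close this gap. Throughout I also use the standing assumption $w > 0$, which is already implicit in the previous lemma's evaluation of $f'$ at the fixed points.
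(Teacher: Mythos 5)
Your proposal is correct and follows essentially the same route as the paper's own proof: analyzing the sign of $f(y)-y = u(y)/v(y)$ between and outside the two fixed points $p$ (attracting) and $-\frac{1}{\beta p}$ (repelling), and tracking the resulting monotone orbit. You are in fact more careful than the paper, which omits the justification that a decreasing positive orbit actually crosses into $[p,0]$ (where the $p\,y(t)\ge 0$ lemma takes over), and which ignores the clipping boundary and the possible blow-up of the denominator in the divergent case; these additions close genuine gaps rather than change the argument.
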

\begin{proof}
Assume $y(t) \in (0,1]$ and $p \in (-1,0)$, 
\begin{itemize}
\item if $\left| \frac{1}{\beta p} \right| > 1$, all $y(t) \in (0,1] < -\frac{1}{\beta p}$, $y(t)$ is attracted to $p$ as the updating goes; 
\item if $\left| \frac{1}{\beta p}\right| = 1$, $y(t)$ is repelled by the extreme point and goes to the attracting one unless it starts with $-\frac{1}{\beta p}$ at time $t$; 
\item if $\left| \frac{1}{\beta p} \right| < 1$, when $0 < y(t) < -\frac{1}{\beta p}$, $y(t+1) - y(t) = \frac{u(y(t))}{v(y(t))} < 0$, $y(t+1) < y(t)$, the agent's opinion decreases to $p$; when $y(t) = -\frac{1}{\beta p} $, $y(t)$ stays there; when $y(t) > -\frac{1}{\beta p} $, $y(t+1) > y(t)$, the agent's opinion increases to the extreme value on its side.
\end{itemize}
\end{proof}

\subsection{A group of nodes in the environment} \label{proof:theorem12}
Assume there is a set of $m$ neighbour with different fixed opinions, $\mathbf{p} = (p_1, p_2, ..., p_m)$, $m >1$. We denote
\begin{itemize}
\item $q = \sum_j p_j^2$ the sum of the squares of the fixed opinions.
\item $s = \sum_j p_j$ the sum of the fixed opinions.
\item $m = \sum_j 1$ the number of nodes in the environment.
\end{itemize}
\begin{lemma}
$mq - s^2 \geq 0$, which is $m\sum_j p_j^2 \geq (\sum_j p_j)^2$.
\end{lemma}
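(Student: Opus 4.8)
The plan is to recognize this inequality as an instance of the Cauchy--Schwarz inequality (equivalently, as the statement that an empirical variance cannot be negative), so that nothing beyond elementary algebra is required.

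Concretely, I would apply Cauchy--Schwarz to the vector $\mathbf{p} = (p_1, \ldots, p_m)$ and the all-ones vector $\mathbf{1} = (1, \ldots, 1)$ of length $m$: this gives $\left( \sum_j p_j \cdot 1 \right)^2 \le \left( \sum_j p_j^2 \right)\left( \sum_j 1^2 \right)$, i.e.\ $s^2 \le q m$, which is exactly the claimed $mq - s^2 \ge 0$.

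An even more self-contained route, which I would probably prefer to include in the paper, is to expand the difference directly and symmetrize over the indices: $mq - s^2 = \sum_j \sum_k p_j^2 - \sum_j \sum_k p_j p_k = \tfrac{1}{2}\sum_j \sum_k (p_j - p_k)^2 \ge 0$, where the final expression is manifestly a sum of squares. This route also makes the equality case transparent: $mq = s^2$ holds iff all the $p_j$ are equal, which I would record as a short remark in case strictness is needed later in the multi-node analysis. There is no genuine obstacle here; the only care required is keeping the index bookkeeping straight in the symmetrization step.
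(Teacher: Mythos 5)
Your proposal is correct, and your preferred ``self-contained'' route is exactly the paper's own proof: the identity $mq - s^2 = \tfrac{1}{2}\sum_i\sum_j (p_i - p_j)^2 \geq 0$. The Cauchy--Schwarz alternative and the remark on the equality case (all $p_j$ equal) are fine but add nothing the paper needs.
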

\begin{proof}
$$ m\sum_j p_j^2 - (\sum_j x_j)^2  = \frac{1}{2} \sum_i \sum_j (p_i - p_j)^2 \geq 0$$
\end{proof}

The agent's opinion is updated by 
\begin{align}
y(t+1) = &\left\{\begin{matrix}
 \mathrm{sgn} (y(t)) & \mathrm{if} \; w + \beta s y(t) + m  \leq 0,\\ 
  \frac{wy(t) + \beta q y(t) + s}{  w + \beta s y(t) + m } &  \mathrm{otherwise}.
\end{matrix}\right. \label{al:a2fy} 
\end{align}

\begin{lemma}
If $w + \beta s y(t) + m  > 0$, there exist two fixed points where $y(t+1) = y(t)$: 
\begin{align*}
y^a = \frac{\beta q - m + \sqrt{\Delta}}{2\beta s} \;\;\;\;\; y^r = \frac{\beta q - m - \sqrt{\Delta}}{2\beta s}
\end{align*}
where $ \Delta = (\beta q - m)^2 + 4\beta s^2$. $y^a$ is attracting while $y^r$ is repelling.
\end{lemma}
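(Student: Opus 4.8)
The plan is to follow the single-environment-node lemma verbatim, replacing $p^2\mapsto q$, $p\mapsto s$ and $1\mapsto m$. On the region $w+\beta s y+m>0$ the update acts by the rational map $f(y)=\dfrac{(w+\beta q)y+s}{\beta s y+w+m}$, and a fixed point is any $y$ with $f(y)=y$. The first step is the algebraic identity
\begin{align*}
f(y)-y=\frac{u(y)}{v(y)},\qquad u(y)=-\beta s\,y^{2}+(\beta q-m)y+s,\qquad v(y)=\beta s\,y+w+m,
\end{align*}
which, since $v(y)>0$ on this region, shows that the fixed points are exactly the roots of the quadratic $u$. Applying the quadratic formula to $\beta s\,y^{2}-(\beta q-m)y-s=0$ yields the two stated values $y^{a},y^{r}$ with $\Delta=(\beta q-m)^{2}+4\beta s^{2}$; since $\beta>0$ and $s\neq 0$, we have $\Delta>0$, so the two fixed points are real and distinct.

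For the stability claim I would differentiate: a one-line quotient-rule computation gives $f'(y)=\dfrac{N}{v(y)^{2}}$ with $N=(w+\beta q)(w+m)-\beta s^{2}$. The inequality $mq\ge s^{2}$ from the preceding lemma (a Cauchy--Schwarz type bound) gives $N\ge w\bigl(w+m+\beta s^{2}/m\bigr)>0$, so $f'\ge 0$ and $|f'|=f'$ throughout. The key observation is then that the product and sum of the roots of $u$, namely $y^{a}y^{r}=-1/\beta$ and $y^{a}+y^{r}=(\beta q-m)/(\beta s)$, combine to give $v(y^{a})\,v(y^{r})=N$, whence $f'(y^{a})\,f'(y^{r})=N^{2}/(v(y^{a})v(y^{r}))^{2}=1$; thus one multiplier is $\le1$ and the other $\ge1$. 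To decide which, substitute $\beta s\,y^{a}=\tfrac12(\beta q-m+\sqrt{\Delta})$ to get $v(y^{a})=w+\tfrac12(\beta q+m+\sqrt{\Delta})$ and likewise $v(y^{r})=w+\tfrac12(\beta q+m-\sqrt{\Delta})$; both are positive (for $v(y^{r})$ this again invokes $s^{2}<qm$), and $v(y^{a})>v(y^{r})$, so $f'(y^{a})=N/v(y^{a})^{2}<N/v(y^{r})^{2}=f'(y^{r})$. Together with $f'(y^{a})f'(y^{r})=1$ this forces $f'(y^{a})<1<f'(y^{r})$, i.e.\ $y^{a}$ is attracting and $y^{r}$ is repelling, exactly mirroring the single-node case where $v(p)^{2}=(w+\beta p^{2}+1)^{2}$ and $v(-1/\beta p)^{2}=w^{2}$.

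The routine-but-fiddly part, which I expect to be the main source of bookkeeping, is verifying the identity $v(y^{a})v(y^{r})=N$ and the positivity of $v(y^{r})$: both reduce to expanding products of roots and reusing $mq\ge s^{2}$. The only genuine (if minor) obstacle is the excluded degenerate case $s=0$: there the formulas for $y^{a},y^{r}$ are undefined, the map collapses to the contraction $y\mapsto\frac{w+\beta q}{w+m}\,y$ whose unique fixed point is $0$, and this should simply be flagged separately (it corresponds to an environment whose opinions sum to zero, already covered by the neutral-environment discussion).
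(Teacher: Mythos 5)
Your proof is correct, and its first half --- identifying the fixed points as the roots of $u(y) = -\beta s y^2 + (\beta q - m)y + s$ and establishing $f'>0$ from $mq \ge s^2$ --- coincides with the paper's argument. Where you genuinely diverge is in deciding which fixed point is attracting. The paper expands $f'(y^a)-1$ and $f'(y^r)-1$ explicitly and pins down the sign of the awkward quantity $\Delta - (2w+m+\beta q)\sqrt{\Delta}$ by a difference-of-squares computation, again invoking $qm\ge s^2$. You instead use Vieta's formulas ($y^a y^r = -1/\beta$, $y^a+y^r = (\beta q-m)/(\beta s)$) to get the identity $v(y^a)v(y^r) = (w+\beta q)(w+m)-\beta s^2 = N$, hence $f'(y^a)f'(y^r)=1$, after which only the comparison $v(y^a)>v(y^r)>0$ is needed. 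I checked the identity; it holds, and your route replaces the paper's sign-chasing with one multiplicative fact that makes transparent why exactly one multiplier exceeds $1$. Both arguments lean on $qm\ge s^2$ at the same critical juncture (for you, positivity of $v(y^r)$, i.e.\ $\sqrt{\Delta} < 2w+m+\beta q$; for the paper, the sign of $f'(y^r)-1$); note you could obtain positivity of both $v(y^a)$ and $v(y^r)$ even more cheaply from $v(y^a)+v(y^r)=2w+\beta q+m>0$ together with $v(y^a)v(y^r)=N>0$.

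One caveat on your closing aside about the degenerate case $s=0$: the map does collapse to $y \mapsto \frac{w+\beta q}{w+m}\,y$, but this is \emph{not} always a contraction. Opposing environmental opinions cancel in $s$ yet add in $q$, so $\beta q > m$ is possible (e.g.\ $p_1=1$, $p_2=-1$, $\beta>1$), in which case the origin is repelling and the opinion diverges to $\mathrm{sgn}(y(0))$ --- consistent with the backfire intuition. This does not affect the lemma itself, whose statement presupposes $s\neq 0$, but the remark as written is wrong.
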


\begin{proof}
The function is $f(y) = \frac{wy + \beta q y + s}{  w + \beta s y + m }$. The two fixed points satisfy $f(y) = y$. $\left | f'(y) \right | = f'(y)$ since
\begin{align*}
f'(y) = &\frac{(w + \beta q)(w + m) - \beta s^2}{(\beta s y + w + m)^2} \\
 =&\frac{w(w+m) + \beta q w + \beta (qm - s^2)}{(\beta s y + w + m)^2} > 0
\end{align*} 

For $y^a = \frac{\beta q - m + \sqrt{\Delta}}{2\beta s}$, $f'(y^a) < 1$ because
\begin{align*}
f'(y^a) - 1 
= &-\frac{1}{2}\frac{(m-\beta q)^2 + 4 \beta s^2 + (2w + m + \beta q)\sqrt{\Delta}}{(\beta s y^a + w + m)^2} \\
< &0
\end{align*}

For $y^r = \frac{\beta q - m - \sqrt{\Delta}}{2\beta s}$, $f'(y^r) > 1$ because
\begin{align*}
f'(y^r) - 1=  &-\frac{1}{2}\frac{(m-\beta q)^2 + 4 \beta s^2 - (2w + m + \beta q)\sqrt{\Delta}}{(\beta s y^r + w + m)^2} \\
= &-\frac{1}{2} \frac{A}{B}
\end{align*}
$ \frac{A}{B} < 0$ since $B > 0$ and it can be proved as below that $A<0$.
\begin{align*}
&\left [  (m-\beta q)^2 + 4 \beta s^2\right ]^2 - \left [  (2w + m + \beta q)\sqrt{\Delta}\right ]^2\\
= &4 \left [   (m-\beta q)^2 + 4 \beta s^2 \right ] \left [  \beta (s^2 - qm) - w(m + w + \beta q)\right ] \\
< &0
\end{align*}
Therefore, $y^a$ is attracting and $y^r$ is repelling.
\end{proof}

\section{Proof of Theorem~\ref{theo:polarization}} \label{proof:theorem2}
Recall that $y_i(t) \in (-1,0) \cup (0,1)$. Given any opinion vector $\mathbf{y}(0)$ of a given connected network $G = (V,E)$, the opinions can be divided into two groups $V_1$ and $V_2$ at any time $t$: a) $\forall i \in V_1$, $y_i(t) > 0$; b)$\forall i \in V_2$, $y_i(t) < 0$, and $V = V_1 \cup V_2$. Denote $n^s_i(t)$ the number of node $i$'s neighbors node that are in the \textbf{s}ame group with $i$ at time $t$, and $n^d_i(t)$ the number of neighbors in the \textbf{d}ifferent group. Specifically, they are denoted as
\begin{align*}
n^s_i(t) = &|N(i)^s|, N(i)^s = \left \{ j | j \in N(i),\; \mathrm{and} \;y_i(t) y_j(t) > 0 \right \}\\
n^d_i(t) = &|N(i)^d|, N(i)^d = \left \{ k | k \in N(i),\; \mathrm{and} \;y_i(t) y_k(t) < 0 \right \}
\end{align*}

\begin{lemma}
For node $i \in V $ fix $\beta_i = \beta > 0$, if $\beta > \frac{1}{\left [ \mathrm{min} \left ( \left | \mathbf{y}(0) \right | \right ) \right ]^2}$, $\lim_{t \rightarrow \infty} \left | y_i(t) \right |= 1$.
\end{lemma}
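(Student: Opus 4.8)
The plan is to show that under the hypothesis $\beta > 1/[\min(|\mathbf{y}(0)|)]^2$, the quantity $\min_i |y_i(t)|$ is nondecreasing in $t$ and in fact strictly increases (while it stays below $1$), forcing every $|y_i(t)| \to 1$. First I would fix a node $i$ and write $m_t \triangleq \min_j |y_j(t)|$; by the hypothesis $\beta m_0^2 > 1$, so in particular $\beta |y_i(t)||y_j(t)| \geq \beta m_t^2 \geq \beta m_0^2 > 1$ for every neighbor $j$, provided I can maintain the invariant $m_t \geq m_0$. The key algebraic observation is that the weight on an edge between \emph{disagreeing} neighbors, $w_{ij}(t) = \beta y_i(t)y_j(t) + 1$, is then strictly negative, and on \emph{agreeing} neighbors it is positive. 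I would separate the neighbor sum in Eq.~(\ref{eq:BABEmodel}) into the contributions from $N(i)^s$ and $N(i)^d$ and show that both the numerator term $w_{ij}(t)y_j(t)$ and the effect on the denominator push $|y_i(t+1)|$ upward relative to $|y_i(t)|$.

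The main step is a careful sign/magnitude estimate. Assume WLOG $y_i(t) > 0$ (the case $y_i(t)<0$ is symmetric). For a same-group neighbor $j$, $w_{ij}(t)y_j(t) = (\beta y_i(t)y_j(t)+1)y_j(t) > y_j(t) > 0$, contributing positively. For a different-group neighbor $k$, $y_k(t) < 0$ but $w_{ik}(t) < 0$, so $w_{ik}(t)y_k(t) = -|w_{ik}(t)|\,y_k(t) > 0$ as well — crucially, since $\beta|y_i(t)y_k(t)| > 1$ we get $|w_{ik}(t)| = \beta|y_i(t)||y_k(t)| - 1$ and hence $w_{ik}(t)y_k(t) = (\beta|y_i(t)||y_k(t)|-1)|y_k(t)| \geq (\beta m_t^2 - 1)m_t > 0$. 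Meanwhile the denominator is $w_{ii} + \sum_{j\in N(i)^s} w_{ij}(t) + \sum_{k\in N(i)^d} w_{ik}(t)$; one checks it is positive (so we are in the second branch of the model, not the $\mathrm{sgn}$ branch) but \emph{smaller} than $w_{ii}$ plus the same-group contribution alone, because the disagreeing edges subtract from it. Dividing a numerator that exceeds $w_{ii}y_i(t) + \sum_{j\in N(i)^s} w_{ij}(t)y_j(t) \geq (w_{ii} + \sum_{j\in N(i)^s} w_{ij}(t))\,y_i(t)$ — using $y_j(t) > y_i(t)$ is \emph{not} available, so instead I would use $y_j(t) \geq m_t$ and $w_{ij}(t)y_j(t) \geq w_{ij}(t)m_t$ throughout, combined with $w_{ii}y_i(t) \geq w_{ii}m_t$ — by a smaller positive denominator, and invoking that the numerator is itself $\geq m_t$ times (denominator plus a positive slack coming from the disagreeing edges), yields $y_i(t+1) \geq m_t$, with strict inequality unless node $i$ is already at an extreme. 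Taking the minimum over $i$ gives $m_{t+1} \geq m_t$, so the invariant $m_t \geq m_0$ is preserved by induction, and the recursion $\beta m_t^2 > 1$ holds at every step.

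Finally, to upgrade monotonicity to convergence to $1$, I would argue that $m_t$ is a bounded monotone sequence, hence converges to some $\mu \leq 1$; if $\mu < 1$, then at the limit every edge weight estimate above gives a \emph{uniform} strict increase (the slack $(\beta\mu^2 - 1)\mu > 0$ does not vanish and connectivity guarantees at least the minimizing node has some neighbor, since $G$ is connected with $n\geq 2$), contradicting convergence; hence $\mu = 1$, i.e. $\lim_{t\to\infty}|y_i(t)| = 1$ for all $i$. The main obstacle I anticipate is the denominator bookkeeping: one must verify that the denominator stays strictly positive throughout (so the $\mathrm{sgn}$ branch and the clipping are never triggered prematurely in a way that breaks the monotonicity argument) — or alternatively observe that if the $\mathrm{sgn}$ branch \emph{does} fire then $|y_i(t+1)| = 1$ immediately, which is even better for the conclusion, and that clipping at $1$ is consistent with $m_t$ increasing toward $1$. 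Handling these boundary branches cleanly, rather than the core inequality, is where the care is needed.
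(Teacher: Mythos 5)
Your proposal follows essentially the same route as the paper's proof: split each node's neighborhood into agreeing and disagreeing neighbors, observe that under the hypothesis the disagreeing edges get strictly negative weights so that both their numerator contribution and their effect on the denominator push $\left|y_i(t+1)\right|$ above the current minimum, and drive $\min_j \left|y_j(t)\right|$ monotonically up to $1$. If anything you are more careful than the paper, which only argues that the time-$t$ minimizer becomes more extreme (rather than bounding every node's update below by $m_t$, which is what $m_{t+1}\geq m_t$ actually requires) and omits both the explicit handling of the $\mathrm{sgn}$/clipping branches and the final monotone-convergence step showing the limit is exactly $1$.
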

\begin{proof}
For node $i \in V$, the opinion is updated with BEBA.
If $\gamma = 1+\sum_{j \in N(i)}w_{ij} \leq 0$, $y_i(t+1)$ reaches the extreme value in one iteration due to strong backfire effect. 

While when $\gamma > 0$, for any $t>0$, $y_i(t+1) $ is updated as
\begin{align}
y_i(t) \frac{1 + \sum_{j \in N(i)^s}w_{ij}\frac{y_j(t)}{y_i(t)} + \sum_{k \in N(i)^d}w_{ik}\frac{y_k(t)}{y_i(t)} }{1 + \sum_{j \in N(i)^s}w_{ij} + \sum_{k \in N(i)^d}w_{ik}}  = y_i(t) \frac{C}{D}\label{al:updatePosDemon}
\end{align}

When $\beta > \frac{1}{\left [ \mathrm{min} \left ( \left | \mathbf{y}(t) \right | \right ) \right ]^2}$, for all $k \in N(i)^d$, $w_{ik} = \beta y_i(t)y_k(t) + 1 < 0$. The sums in Equation~(\ref{al:updatePosDemon}) satisfy: $\sum_{j \in N(i)^s}w_{ij}\frac{y_j(t)}{y_i(t)}$, $\sum_{j \in N(i)^s}w_{ij}$, $\sum_{k \in N(i)^d}w_{ik}\frac{y_k(t)}{y_i(t)}>0$, and $\sum_{k \in N(i)^d}w_{ik}<0$. 

Now we focus on the node that has the most moderate opinion, namely the node with absolute value of opinion $\mathrm{min} \left | \mathbf{y}(t) \right |$ at each time step, starting from time $0$. Knowing $C, D > 0$,
\begin{align}
C - D =   \sum_{j \in N(i)^s}w_{ij}(\frac{y_j(t)}{y_i(t)}-1) + \sum_{k \in N(i)^d}w_{ik}(\frac{y_k(t)}{y_i(t)}-1) \label{al:diffCD}
\end{align}
Since $y_i(t)$ has the smallest absolute opinion value, for any $j \in N(i)^s$, $\frac{y_j(t)}{y_i(t)} \geq 1$, thus $C>D$, $\frac{C}{D}> 1$, and $\left | y_i(t+1) \right | > \left | y_i(t)\right |$.

After every iteration from time $t$ to $t+1$, the opinion of the most moderate node becomes more extreme, until it reaches the absolute value of $1$, thus for any $i \in V$, $\lim_{t \rightarrow \infty} \left | y_i(t) \right |= 1$.
\end{proof}

\begin{lemma}
For node $i \in V $, if $\beta < \frac{1}{\left [ \mathrm{max} \left ( \left | \mathbf{y}(0) \right | \right ) \right ]^2}$, there exists a unique $y^* \in [-\mathrm{max} \left ( \left | \mathbf{y}(0) \right | \right ) ,\mathrm{max} \left ( \left | \mathbf{y}(0) \right | \right ) ]$ such that $\lim_{t \rightarrow \infty} y_i(t) = y^*$ for all $i \in V$.
\end{lemma}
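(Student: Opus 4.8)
The plan is to show that the hypothesis $\beta < 1/[\max(|\mathbf{y}(0)|)]^2$ confines the dynamics to an ``averaging regime'' in which all edge weights stay strictly positive, so that the nonlinear \modelName{} update becomes a product of row-stochastic matrices sharing the fixed, connected sparsity pattern of $G$ (plus self-loops); a standard ergodicity argument then forces consensus.

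\textbf{Step 1 (the weights stay positive).} Write $M \triangleq \max_i |y_i(0)|$ (note $M<1$) and $M_t \triangleq \max_i |y_i(t)|$. I would prove by induction on $t$ that $M_t \le M$, and hence $0 < 1 - \beta M^2 \le w_{ij}(t) = \beta y_i(t)y_j(t) + 1 < 2$ for every edge at every time. For the inductive step: if $M_t \le M$ then $|y_i(t)y_j(t)| \le M^2 < 1/\beta$, which gives the claimed bound on $w_{ij}(t)$; since also $w_{ii} = 1 > 0$, the denominator $1 + \sum_{j\in N(i)} w_{ij}(t)$ is positive, so the sign/clipping branches never fire and $y_i(t+1)$ is a genuine convex combination of $\{y_i(t)\}\cup\{y_j(t):j\in N(i)\}$; therefore $|y_i(t+1)| \le M_t \le M$, i.e.\ $M_{t+1}\le M$. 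The base case is $M_0 = M$. Consequently $\mathbf{y}(t+1) = A(t)\mathbf{y}(t)$ for all $t$, where each $A(t)$ is row-stochastic, $A(t)_{ij}>0$ iff $i=j$ or $(i,j)\in E$, and (using $w_{ij}(t)\ge 1-\beta M^2$ and $1+\sum_j w_{ij}(t) < 1+2n$) every nonzero entry is at least a fixed constant $\delta \triangleq (1-\beta M^2)/(1+2n) > 0$ depending only on $\beta$, $M$, $n$.

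\textbf{Step 2 (monotone envelopes and contraction of the spread).} Since each $y_i(t+1)$ lies in $[\min_k y_k(t),\ \max_k y_k(t)]$, the quantities $m(t) \triangleq \min_i y_i(t)$ and $\overline m(t) \triangleq \max_i y_i(t)$ are non-decreasing and non-increasing respectively, hence convergent to some $L \le U$, with $L,U \in [m(0),\overline m(0)] \subseteq [-M,M]$. It remains to show the spread $\sigma(t) \triangleq \overline m(t) - m(t)$ tends to $0$: because $G$ is connected and every $A(t)$ has positive diagonal, the product of any $n$ consecutive matrices $A(t+n-1)\cdots A(t)$ is entrywise $\ge \delta^{n}$ (pad a path of length $\le n-1$ from any node to any other with self-loops), so its Dobrushin coefficient of ergodicity is at most $1 - n\delta^{n} < 1$, which yields $\sigma(t+n) \le (1-n\delta^{n})\,\sigma(t)$. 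Iterating forces $\sigma(t)\to 0$, so $L=U=:y^*$ and, by the squeeze $m(t)\le y_i(t)\le \overline m(t)$, $\lim_{t\to\infty} y_i(t) = y^*$ for every $i$; uniqueness of $y^*$ is automatic as a limit, and $y^*\in[-M,M]=[-\max(|\mathbf{y}(0)|),\max(|\mathbf{y}(0)|)]$ from the envelope bounds.

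\textbf{Main obstacle.} The difficulty is the nonlinearity: $A(t)$ depends on the current opinions, so one cannot diagonalize once and for all. The real content lies in the Step~1 induction showing that the condition $\beta < 1/M^2$ is \emph{preserved along the entire trajectory} (this is what keeps every weight positive and the update an averaging step), and in extracting a single, trajectory-independent contraction factor $1-n\delta^{n}$ for the spread — it is this uniformity that upgrades ``the opinions stay in a shrinking interval'' to ``the opinions converge to a common value.'' Both hinge on the strictness of the hypothesis: at $\beta = 1/M^2$ a weight can vanish, which is precisely the persistent-disagreement borderline appearing in the Corollary.
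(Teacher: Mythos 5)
Your proposal is correct, and it takes a genuinely different route from the paper's. Both arguments share the same crucial opening move: under $\beta < 1/[\max(|\mathbf{y}(0)|)]^2$ every weight $w_{ij}(t)=\beta y_i(t)y_j(t)+1$ stays strictly positive, the update is a convex combination, and hence the bound on $\max_i|y_i(t)|$ is preserved along the whole trajectory (the paper states the positivity of the weights but does not spell out the induction you give in Step 1, which is needed since the weights depend on $\mathbf{y}(t)$). From there the paths diverge. The paper tracks the \emph{most opinionated} node, splits on the sign of the numerator $C$ in its rewritten update $y_i(t+1)=y_i(t)\,C/D$, shows $|y_i(t+1)|<|y_i(t)|$ for that node in each case, and then concludes with the informal claim that the nodes ``get moderated until they reach consensus.'' That monotonicity alone does not actually force the spread to vanish --- a strictly decreasing maximum absolute opinion could stall above the minimum --- so the paper's proof has a genuine gap at its final step. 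You instead recast the dynamics as $\mathbf{y}(t+1)=A(t)\mathbf{y}(t)$ with row-stochastic matrices whose nonzero entries are bounded below by a uniform $\delta>0$, and invoke connectivity plus the Dobrushin coefficient to get a trajectory-independent contraction factor $1-n\delta^{n}$ for the spread over every window of $n$ steps. What your approach buys is precisely the missing uniformity: it upgrades ``opinions stay in a shrinking interval'' to genuine convergence to a common $y^*$, and even yields a geometric rate; what the paper's approach buys is a shorter, more elementary computation that also exposes the borderline role of $\beta=1/M^2$ (vanishing cross-group weights) exploited in the Corollary. One cosmetic remark: your entrywise lower bound $\delta=(1-\beta M^2)/(1+2n)$ should be checked against the worst-case denominator $1+\sum_{j\in N(i)}w_{ij}(t)<1+2(n-1)$, which it safely under-estimates, so the argument stands.
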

\begin{proof}
When $\beta < \frac{1}{\left [ \mathrm{max} \left ( \left | \mathbf{y}(0) \right | \right ) \right ]^2}$, $\gamma = 1+\sum_{j \in N(i)}w_{ij} > 0$ because for any $j \in N(i)$, $w_{ij} = \beta y_i(t)y_j(t) + 1 > 0$.

For any $t>1$, $y_i(t+1) $ is updated as in Equation~(\ref{al:updatePosDemon}), however, the sums have different values: $\sum_{j \in N(i)^s}w_{ij}\frac{y_j(t)}{y_i(t)}$, $\sum_{j \in N(i)^s}w_{ij}$, $\sum_{k \in N(i)^d}w_{ik}>0$, and $\sum_{k \in N(i)^d}w_{ik}\frac{y_k(t)}{y_i(t)}<0$.

Then we focus on the most opinionated node, which means the node has the largest absolution value of its opinion $\mathrm{max} \left | \mathbf{y}(t) \right |$, starting from time $0$. Knowing $D>0$, 
\begin{itemize}
\item when $C>0$, $C-D$ is shown in Equation~(\ref{al:diffCD}). With $i$ being the most opinionated node, $\frac{y_j(t)}{y_i(t)} \leq 1$ for all $j \in N(i)^s$; $\frac{y_k(t)}{y_i(t)}<0$ for all $k \in N(i)^d$. Therefore, $C<D$, $0<\frac{C}{D}<1$ and $\left | y_i(t+1) \right | < \left | y_i(t)\right |$.
\item when $C = 0$, $y_i(t+1) = 0$.
\item when $C<0$, $-C-D$ is shown in Equation~(\ref{al:diffNCD}). As $-1 \leq \frac{y_k(t)}{y_i(t)} \leq 0$ for $k \in N(i)^d$, $-C - D< 0$, $0< \left | \frac{C}{D}\right |<1$, thus $\left | y_i(t+1) \right | < \left | y_i(t)\right |$.
\end{itemize}
\begin{align}
-2 - \sum_{j \in N(i)^s}w_{ij}(\frac{y_j(t)}{y_i(t)}+1) - \sum_{k \in N(i)^d}w_{ik}(\frac{y_k(t)}{y_i(t)}+1) \label{al:diffNCD}
\end{align}

At every time step, the most opinionated node get moderated until they reach consensus - there is no such node and the updating process stops because consensus is reached.

\end{proof}

\begin{lemma}
For node $i \in V_1$, $y_i(0) = y_0$, where $0<y_0<1$; $\forall i \in V_2$, $y_i(0) = -y_0$. If $\beta = \frac{1}{y_0^2}$, $y_i(t)  = y_i(0)$ for all $t \geq 0$.
\end{lemma}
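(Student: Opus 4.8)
The plan is a straightforward induction on $t$, using the fact that the opinion configuration $(y_0$ on $V_1$, $-y_0$ on $V_2)$ makes the dynamic edge weights take exactly two values. For the base case, $y_i(0)$ equals $y_0$ for $i\in V_1$ and $-y_0$ for $i\in V_2$ by hypothesis. For the inductive step, assume $y_j(t)=y_0$ for all $j\in V_1$ and $y_j(t)=-y_0$ for all $j\in V_2$, and consider an arbitrary node $i$; by symmetry it suffices to treat $i\in V_1$.

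First I would compute the weights. For an edge $(i,j)$ with $j$ in the same group as $i$, $w_{ij}(t)=\beta y_i(t)y_j(t)+1=\frac{1}{y_0^2}\cdot y_0^2+1=2$. For an edge $(i,k)$ with $k$ in the opposite group, $w_{ik}(t)=\frac{1}{y_0^2}\cdot(-y_0^2)+1=0$ — this is the parenthetical remark in the Corollary that $w_{ij}=0$ across the cut. Using $n_i^s(t)$ and $n_i^d(t)$ as defined earlier, the denominator of the BEBA update is $w_{ii}+\sum_{j\in N(i)}w_{ij}(t)=1+2n_i^s(t)+0\cdot n_i^d(t)=1+2n_i^s(t)\ge 1>0$, so we are in the ``otherwise'' branch of the model definition (never the $\mathrm{sgn}$ branch), and no clipping is triggered.

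Next I would evaluate the numerator: $w_{ii}y_i(t)+\sum_{j\in N(i)}w_{ij}(t)y_j(t)=y_0+\sum_{j\in N(i)^s}2y_0+\sum_{k\in N(i)^d}0\cdot(-y_0)=y_0\bigl(1+2n_i^s(t)\bigr)$. Dividing by the denominator gives $y_i(t+1)=y_0$. The identical computation with signs flipped shows $y_i(t+1)=-y_0$ for $i\in V_2$. This closes the induction, so $y_i(t)=y_i(0)$ for all $t\ge 0$, and in particular the limit exists and equals $y_i(0)$ for every node.

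I do not expect a real obstacle here; the only points needing a word of care are (i) confirming the denominator is strictly positive even for a node with $n_i^s(t)=0$ (it still equals $1$), so the fraction branch genuinely applies, and (ii) noting that the argument is purely local and hence independent of the network structure, which is exactly why this borderline case survives for every connected $G$. If one wants to also record the ``$\beta=1/y_0^2 \Leftrightarrow$ cut edges vanish'' equivalence stated in the Corollary, that is immediate from $w_{ij}(t)=\beta(-y_0^2)+1$.
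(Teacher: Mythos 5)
Your proof is correct and follows essentially the same route as the paper's: both compute that same-group edges get weight $2$ and cross-group edges get weight $0$, so the update reduces to $y_i(t+1)=\frac{y_0(1+2n_i^s(t))}{1+2n_i^s(t)}=y_0$, and conclude by induction. Your version is slightly more careful in explicitly verifying that the denominator stays positive (so the $\mathrm{sgn}$ branch and clipping never trigger), which the paper leaves implicit.
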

\begin{proof}
When $\beta = \frac{1}{y_0^2}$, $w_{ij} = \frac{1}{y_0^2} y_i(t) y_j (t)$. At time $1$,
\begin{align*}
y_i(1) = \frac{y_i(0) + 2n^s_i(0)y_i(0)}{  1 + 2n^s_i(0) }  = y_i(0)
\end{align*}

For any $t \geq 1$, 
$$y_i(t+1) = \frac{y_i(t) + 2n^s_i(t)y_i(t)}{  1 + 2n^s_i(t) }  = y_i(t) = y_i(0)$$.
\end{proof}

\section{Datasets and experimental results} \label{furtherExpRes}
\subsection{Real-world datasets}
\begin{table}[h!]
	\centering
	\caption{Real-world dataset summary}
	\label{tab:realDataset}
	\begin{tabular}{|c|c|c|c|}
		\hline
		Network    & $\left | V\right |$ & $\left | E\right |$ & Event                         \\ \hline
		Karate     & 34    & 78    & Friendship                    \\ \hline
		Tw:Club   & 703   & 3322  & Barcelona in La-liga 2016     \\ \hline
		Tw:Sport & 703   & 3322  & Juventus vs Real Madrid 2015 \\ \hline
		Tw:US     & 533   & 13564 & US Presidential Election 2016 \\ \hline
		Tw:UK     & 231   & 905   & British Election 2015         \\ \hline
		Tw:Delhi  & 548   & 3638  & Delhi Assembly Election 2013  \\ \hline
		Tw:GoT    & 947   & 7922  & GoT promotion 2015            \\ \hline
	\end{tabular}
\end{table}

\subsection{Influence of the opinion vector $\mathbf{y}(0)$ and network topology $G$}
\begin{figure}[h!]
\centering
\includegraphics[width = 1.0\hsize]{./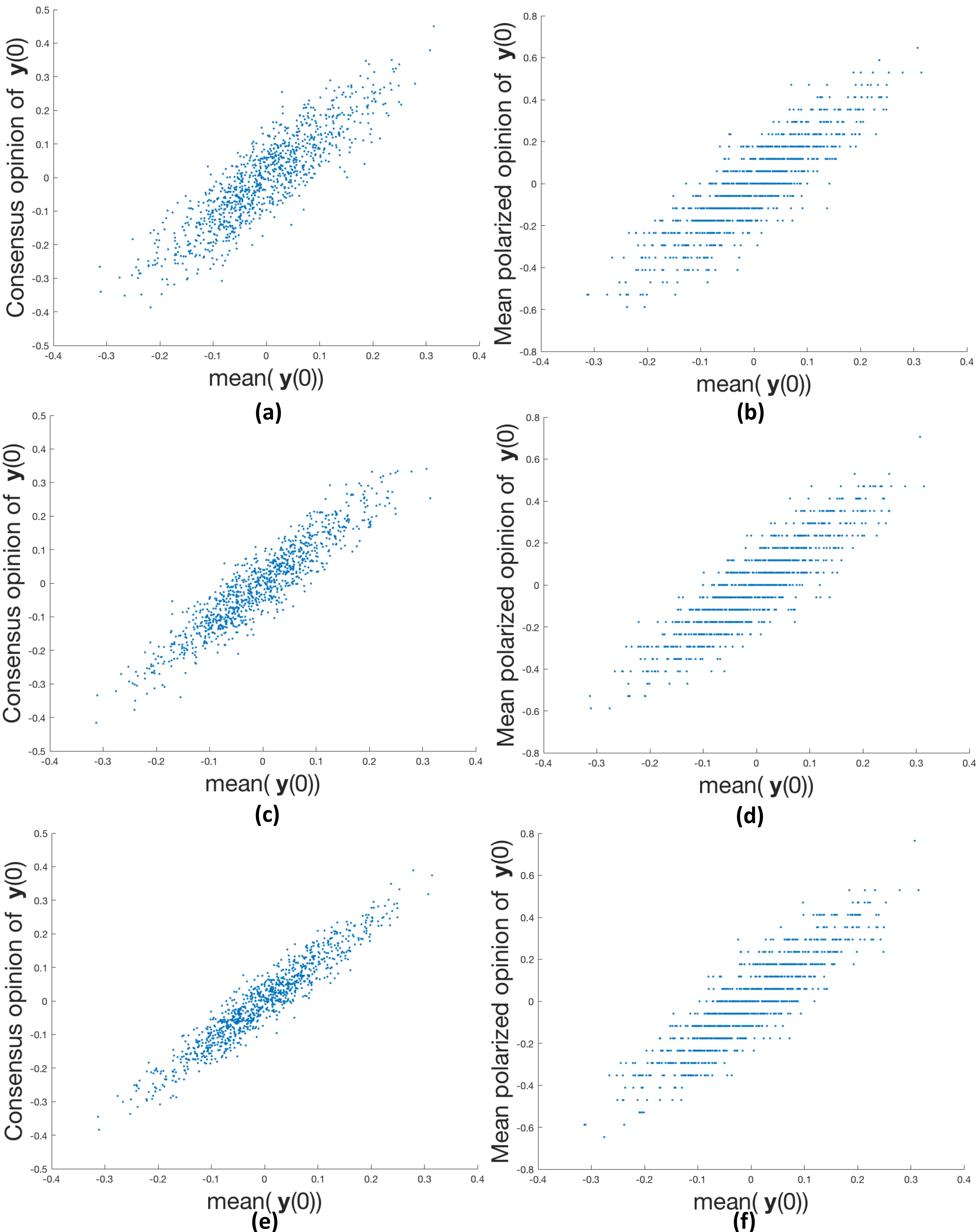}
\caption{For $1000$ random $\mathbf{y}(0)$. (a) and (b) on a BA model ($n = 34, M_0 = 3, M = 2$); (c) and (d) on an ER model ($n = 34, \rho = 0.139$); (e) and (f) on a WS model ($n = 34, K = 2$). The left column of (a), (c), (e) - the consensus opinion when $\beta = 1$; the right column of (b), (d), (f) - the mean polarized opinion when $\beta = 10$.}
\label{fig:E2Rest}
\end{figure}

This figure corresponds to Figure~\ref{fig:E2Result}, and is used to investigate both the effects of the opinion vector and the network topology. Horizontal subfigures show the different consensus and polarization converging states for different $\mathbf{y}(0)$s, while the vertical subfigures show the differences between the three types of random networks of similar sizes. The finding of this experiment is consistent with that of Figure~\ref{fig:E31Result}(b). 

\subsection{Influence of model parameters}
\begin{figure}[H]
\centering
\includegraphics[width = 1.0\hsize]{./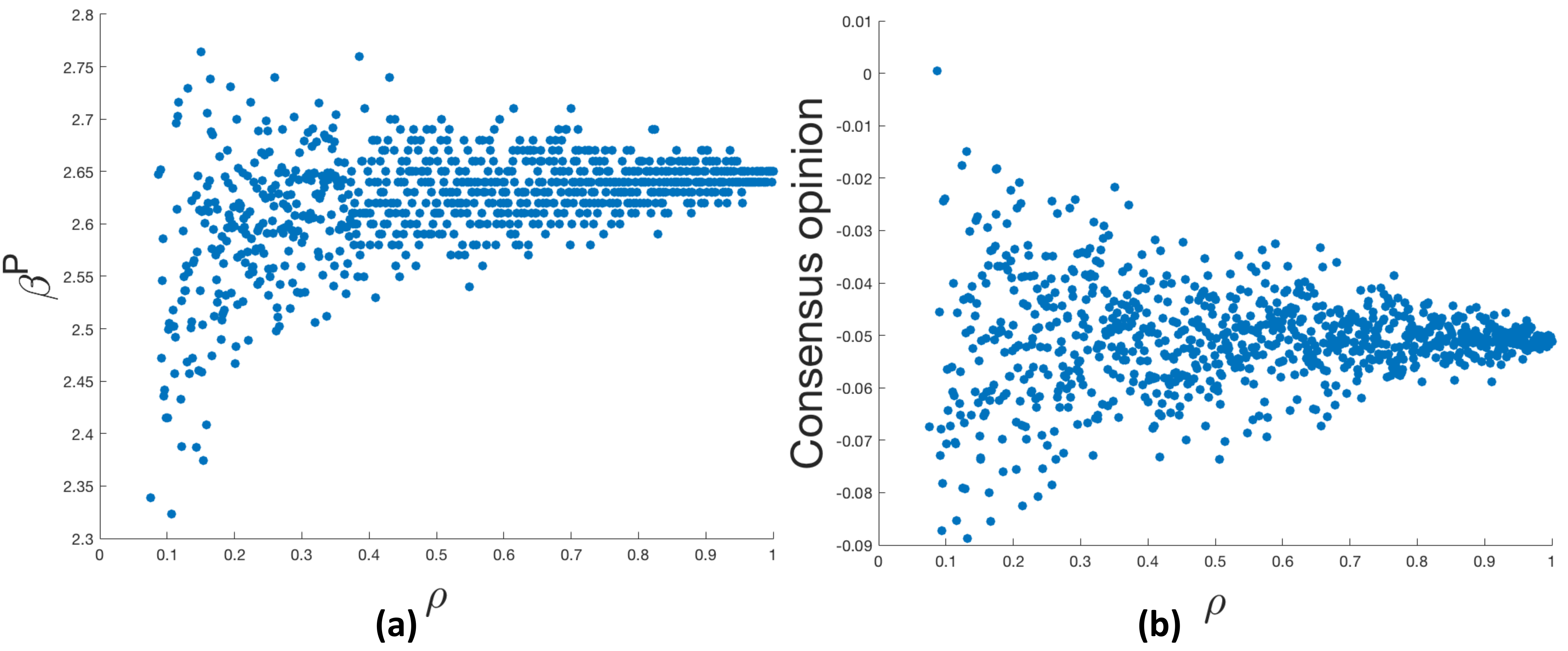}
\caption{For an random opinion vector $\mathbf{y}(0)$, on ER models with $n = 100$ and $\rho \in (0,1]$. (a) the value of $\beta^P$ for the $\mathbf{y}(0)$; (b) the consensus opinion reach by $\mathbf{y}(0)$ when $\beta = 1$.}
\label{fig:E32Result}
\end{figure}

This experiment takes the ER model as an example and investigates the influence of the parameter $\rho$ on the network topology, thus resulting in the influence on opinion dynamics. 

\subsection{Influence of edge placements}
\begin{figure}[H]
\centering
\includegraphics[width = 1.0\hsize]{./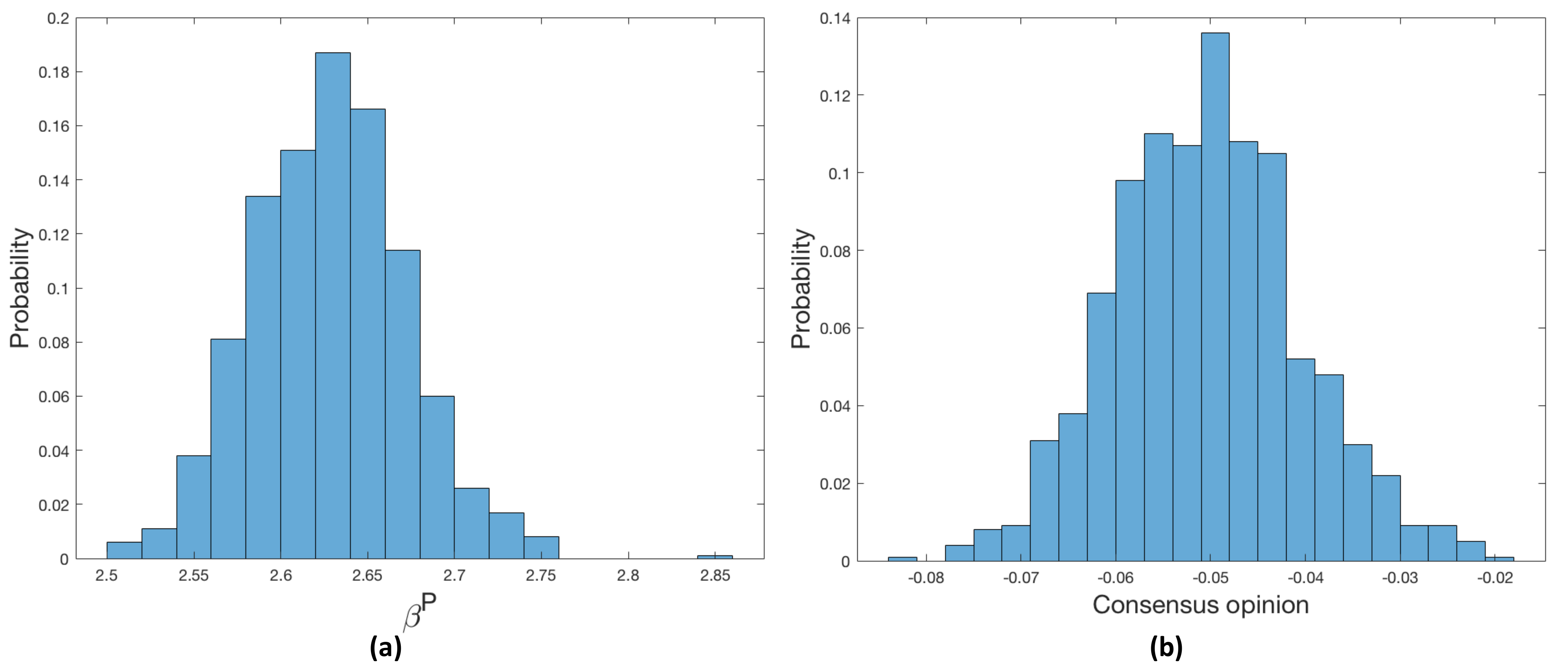}
\caption{For an random opinion vector $\mathbf{y}(0)$ with mean $-0.0395$, on $1000$ ER models with $n = 100$ and $\rho = 0.4 $. (a) the value of $\beta^P$ for the $\mathbf{y}(0)$; (b) the consensus opinion reach by $\mathbf{y}(0)$ when $\beta = 1$.}
\label{fig:E33Result}
\end{figure}

ER model is taken again as the example here for investigating the influence of the network edge placements on opinion dynamics. It shows that the network topology does have significant influence on the value of $\beta^P$ and the consensus opinion value.

\subsection{Influence of the edge addition/deletion in the network}
We can also investigate the question: If someone wants to maximally increase/decrease the value of consensus opinion or the average polarized opinion, which edge should be removed/added?

\ptitle{Add One Edge - Consensus}
\begin{figure}[H]
\centering
\includegraphics[width = 1.0\hsize]{./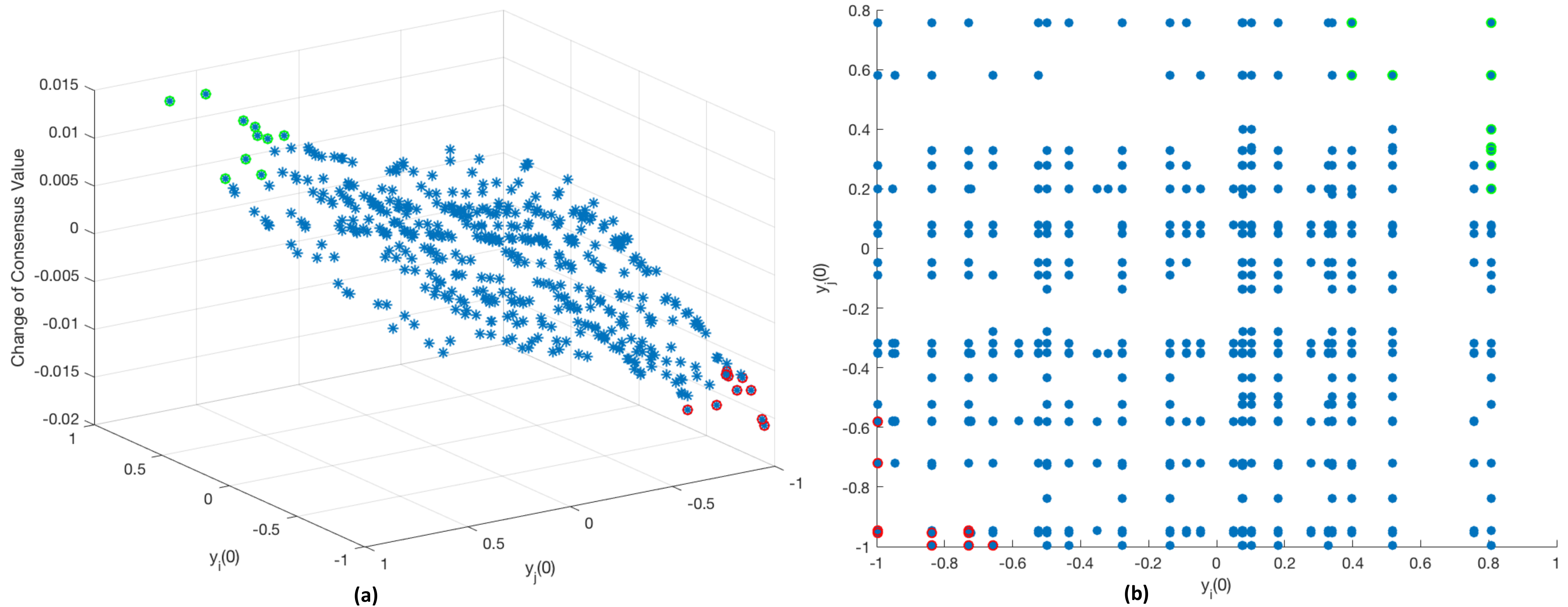}
\caption{Add one edge on Karate network to change the consensus opinion - $\beta = 1$. Top 10 best choices are highlighted: green for increase and red for decrease.}
\label{fig:E3KarateAddC}
\end{figure}

\ptitle{Delete One Edge - Consensus}
\begin{figure}[H]
\centering
\includegraphics[width = 1.0\hsize]{./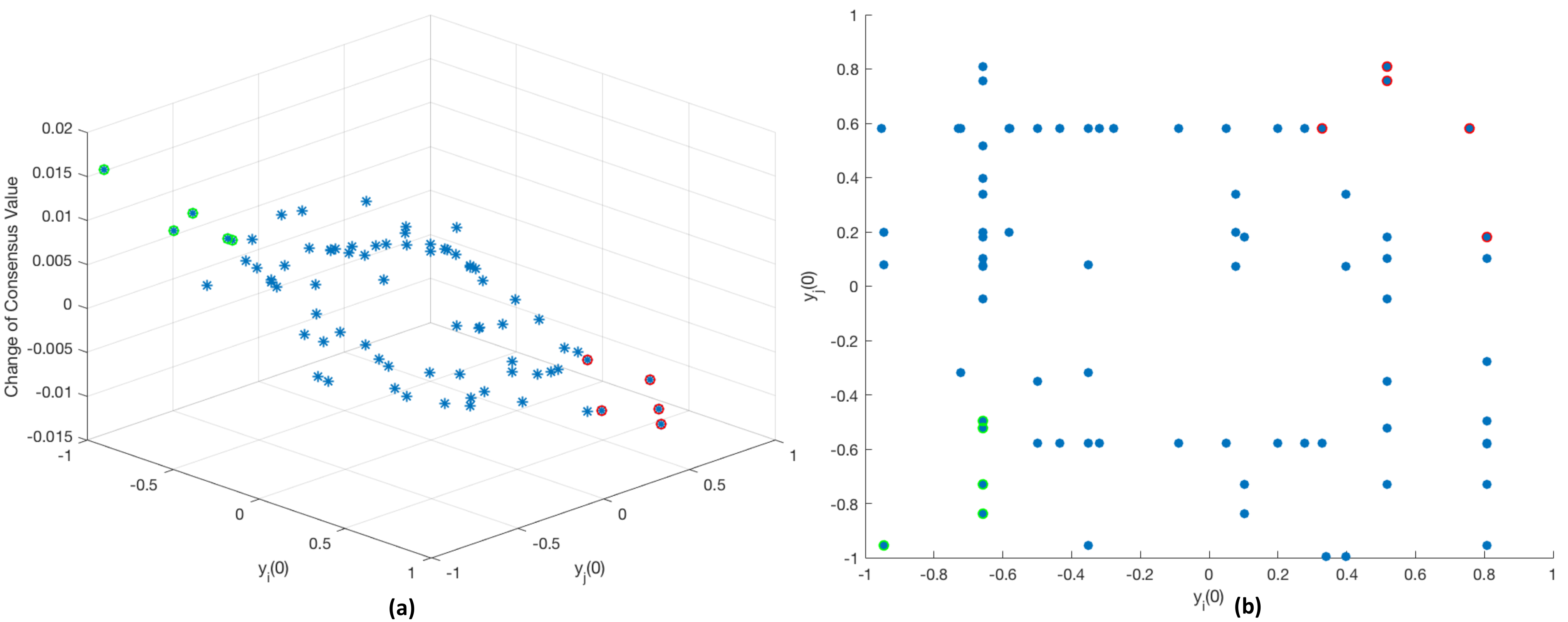}
\caption{Delete one edge on Karate network to change the consensus opinion - $\beta = 1$. Top 5 best choices are highlighted: green for increase and red for decrease.}
\label{fig:E3KarateDelC}
\end{figure}

It shows in Figure~\ref{fig:E3KarateAddC} and~\ref{fig:E3KarateDelC} that in order to maximally increase the consensus value by editing one edge, adding the edge between the most opinionated disconnected negative nodes is the best choice when allowing only addition; while deleting the edge between the most opinionated connected positive nodes is the most effective way if allowing only deletion. A smaller consensus opinion value can be achieved by adding the edge between the most positive opinionated nodes or deleting the one between the most negatively opinionated nodes.

\ptitle{Edge edition that has almost no influence on consensus}
\begin{figure}[H]
\centering
\includegraphics[width = 0.8\hsize]{./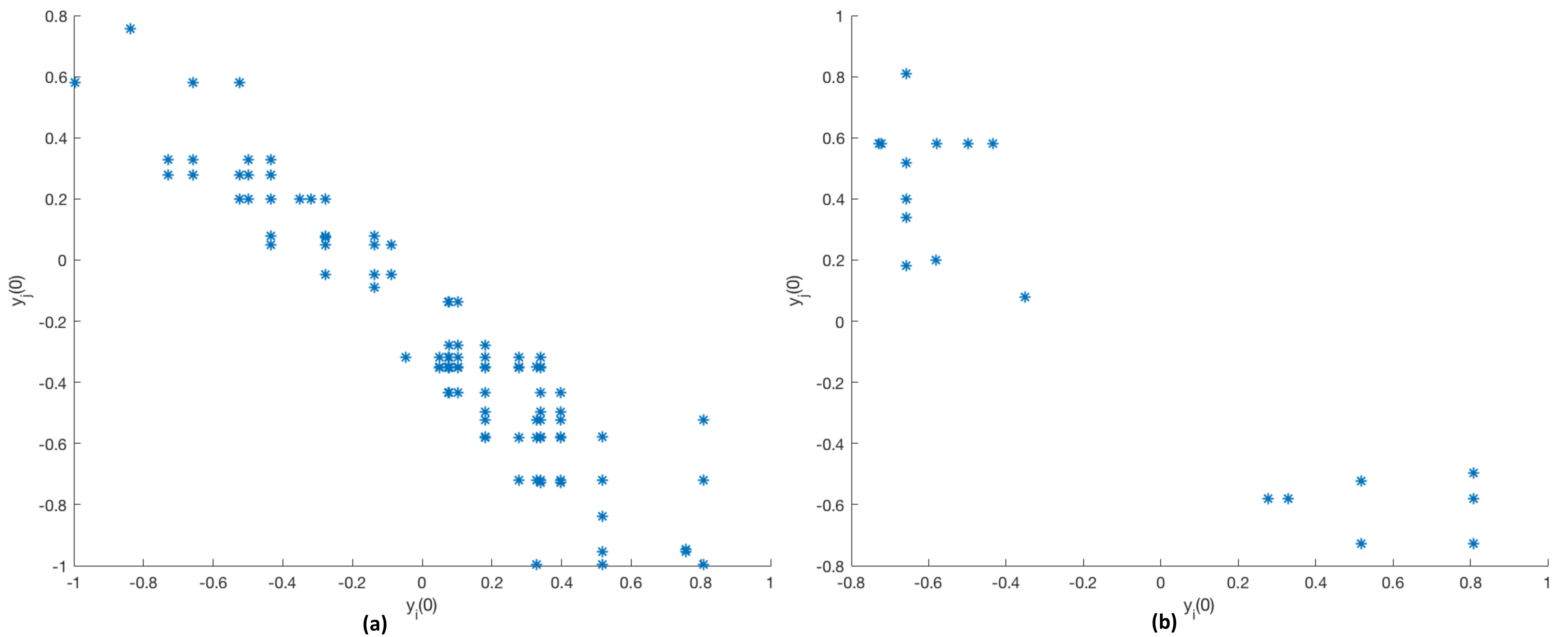}
\caption{Additions - (a) and Deletions - (b) that cause minor change in consensus values on Karate network. ($\left | \mathrm{change} \right | < 10^{-3}$)}
\label{fig:E3KarateCchange0}
\end{figure}

However, the connections between nodes with equivalently (i.e., in terms of absolute opinion value) opposing opinions have almost no influence on the consensus value, as shown in Figure~\ref{fig:E3KarateCchange0}. In contrast, when the network gets polarized, the neighbors of the neutral nodes have more significant influence on the mean polarized opinions.

\end{document}